\begin{document}

\title{Optimal statistical ensembles for quantum thermal state preparation \texorpdfstring{\\}{} within the quantum singular value transformation framework}
\author{Yasushi Yoneta}
\email{yasushi.yoneta@riken.jp}
\affiliation{RIKEN Center for Quantum Computing,
2-1 Hirosawa, Wako City, Saitama 351-0198, Japan}
\date{\today}
\begin{abstract}
Preparing thermal equilibrium states is an essential task
for finite-temperature quantum simulations.
In statistical mechanics, microstates in thermal equilibrium can be obtained from statistical ensembles.
To date, numerous ensembles have been devised,
ranging from Gibbs ensembles such as the canonical and microcanonical ensembles to a variety of generalized ensembles.
Since these ensembles yield equivalent thermodynamic predictions,
one can freely choose an ensemble for computational convenience.
In this paper, we exploit this flexibility
to develop an efficient quantum algorithm
for preparing thermal equilibrium states.
We first present a quantum algorithm
for implementing generalized ensembles
within the framework of quantum singular value transformation.
We then perform a detailed analysis of the computational cost
and elucidate its dependence on the choice of the ensemble.
Our analysis shows that employing an appropriate ensemble
can significantly mitigate ensemble-dependent overhead
and yield improved scaling of the computational cost with system size
compared to existing methods based on the canonical ensemble.
We also numerically demonstrate that
our approach achieves a significant reduction in the computational cost
even for small finite-size systems.
Our algorithm applies to arbitrary thermodynamic systems at any temperature
and is thus expected to offer a practical and versatile method
for computing finite-temperature properties of quantum many-body systems.
These results highlight the potential of ensemble design as a powerful tool
for enhancing the efficiency of a broad class of quantum algorithms.
\end{abstract}

\maketitle

\section{Introduction}
Quantum computers are expected to outperform classical computers in a wide class of computational tasks.
Among their applications, simulations of quantum many-body systems
stand out as one of the most promising~\cite{Feynman1982,Lloyd1996}.
In particular, finite-temperature simulations require preparation of thermal equilibrium states as initial states on quantum devices.
Therefore, developing efficient algorithms for thermal state preparation is essential for realizing practical finite-temperature quantum simulations.
However, computational complexity theory suggests that
preparing thermal equilibrium states of few-body Hamiltonians
is generally hard even for quantum computers,
even when restricted to classical Hamiltonians~\cite{Sly2010,Sly2012,Galanis2016}.
In fact, existing non-heuristic thermal state preparation algorithms
applicable to arbitrary systems at arbitrary temperatures
require computational cost that scales exponentially with the system size~\cite{Poulin2009,Chiang2010,Chowdhury2017,vanApeldoorn2020,vanApeldoorn2019,Gilyen2019}.

In statistical mechanics, microstates corresponding to thermal equilibrium states can be obtained from statistical ensembles~\cite{Gibbs1902,Landau1980,Toda1983,Callen1985}.
Due to their importance, various ensembles have been devised,
including the class of Gibbs ensembles such as the canonical and microcanonical ensembles~\cite{Gibbs1902},
as well as generalized ensembles that are obtained as extensions of the Gibbs ensembles~\cite{Hetherington1987,Challa1988_PRL,Challa1988_PRA,Johal2003,Ray1991,Gerling1993,Tsallis1988,Beck2003,Cohen2004,Costeniuc2005,Costeniuc2006,Toral2006,Penrose1971,Ellis2000,Touchette2010,Yoneta2019,Yoneta2023}.
When chosen appropriately, different ensembles yield equivalent thermodynamic predictions~\cite{Ruelle1999,MartinLof1979,Georgii1995,Lima1972,Mueller2015,Brandao2015,Tasaki2018}.
This allows one to freely select an ensemble
based on the physical situation and computational convenience.
In classical simulations, this flexibility has been exploited
to develop efficient algorithms by choosing suitable ensembles~\cite{Kim2010,Schierz2016,Sugiura2012}.

In contrast, most studies on quantum algorithms for thermal state preparation have employed only the canonical ensemble~\cite{Temme2011,Yung2012,Chen2021,Wocjan2023,Rall2023_2,Chen2023,Poulin2009,Chiang2010,Chowdhury2017,vanApeldoorn2020,vanApeldoorn2019,Gilyen2019,Bilgin2010}.
Although a few quantum algorithms utilizing generalized ensembles have been proposed~\cite{Seki2022,Mizukami2023},
these approaches directly adopt the generalized ensembles originally designed for classical algorithms.
However, the compatibility between ensembles and computational architectures or algorithmic frameworks can differ significantly.
Therefore, to develop efficient quantum algorithms, it is crucial to reconsider the choice of the ensemble.
As such, the flexibility in the choice of the ensemble has not yet been fully exploited for the development of quantum algorithms.

In this work, we explore optimal ensembles from a broad class of generalized ensembles
and develop an efficient quantum algorithm for thermal state preparation.
We adopt quantum singular value transformation (QSVT)~\cite{Gilyen2019,Martyn2021},
a unified framework for designing efficient quantum algorithms,
and construct a thermal state preparation algorithm based on generalized ensembles.
Our approach allows the ensemble-dependent computational overhead to be made arbitrarily small through the use of a suitably chosen ensemble,
thereby improving the scaling of the computational cost in the thermodynamic limit.
As a result, our method achieves a substantial reduction in the computational cost
compared to conventional approaches based on the canonical ensemble.
By providing guaranteed accuracy and applicability to arbitrary thermodynamic systems at any temperature, the proposed algorithm thus offers an efficient and versatile approach for finite-temperature quantum simulation.
In contrast to previous approaches,
which implicitly inherit the limitations of the canonical ensemble
due to its incompatibility with the constraint of unitarity in quantum computation,
our method explicitly constructs and leverages generalized ensembles to circumvent this bottleneck.
This perspective highlights ensemble design
as a key structural element in the development of efficient quantum algorithms.

The remainder of this paper is organized as follows.
In Section~\ref{sec:preliminaries},
we introduce the setting and notation,
review the formulation of statistical mechanics based on generalized ensembles,
and provide a brief overview of QSVT
focusing on the elements essential for our analysis.
In Section~\ref{sec:algorithm},
we present a quantum algorithm for implementing generalized ensembles.
In Section~\ref{sec:cost_TDL},
we perform an asymptotic analysis of the computational cost of our algorithm in the thermodynamic limit
and compare the performance achieved through the use of an optimal ensemble with that of existing methods.
We also discuss the origins of the advantages offered by employing generalized ensembles.
In Section~\ref{sec:cost_finite},
we numerically demonstrate that the advantage of our algorithm persists even for small finite-size systems.
Finally, in Section~\ref{sec:conclusion}, we provide our conclusions.

\section{Preliminaries} \label{sec:preliminaries}

\subsection{Setup} \label{sec:setup}
We introduce the setting and notations in this paper.
Let us consider a quantum spin-$1/2$ system with $N$ sites.
The Hamiltonian of the system is denoted by $H_N$.
We assume that the Hamiltonian $H_N$ is extensive in $N$, i.e., $\|H_N\|=\Theta(N)$.
Furthermore, we assume that the system is consistent with thermodynamics in the sense that
\begin{align}
    \frac{1}{N} \log g_N(u)
\end{align}
converges to an $N$-independent concave function
(the thermodynamic entropy density, $\s(u)$)
in the thermodynamic limit $N\to\infty$,
where $g_N(u)$ denotes the density of many-body microstates at the energy density $u$~\footnote{%
The assumption of the concavity of the limiting function is made to simplify the discussion.
When the limiting function is nonconcave, several difficulties arise,
such as the thermodynamic entropy becoming ill-defined
and the canonical ensemble giving a statistical mixture of macroscopically distinct states~\cite{Gross2001,Yoneta2019}.
However, even in such cases, all the results of this paper remain valid,
provided that $\s$ in the entropy formula~\eqref{eq:entropy-formula_canonical} for the canonical ensemble,
as well as all instances of $\s$ appearing in the asymptotic analysis of the computational cost for the thermal state preparation algorithm applied to the canonical ensemble,
are replaced with the concave envelope of the limiting function.
}.

Let $\I$ be the $2$-dimensional identity operator, and let $X$, $Y$, and $Z$ be the Pauli operators.
The eigenstates of $Z$ with eigenvalues $+1$ and $-1$ are denoted by $\ket{0}$ and $\ket{1}$, respectively.

\subsection{Statistical Ensemble} \label{sec:ensemble}
In statistical mechanics, thermal properties of many-body systems are derived using statistical ensembles.
A variety of ensembles have been introduced,
including Gibbs ensembles such as the canonical and microcanonical ensembles~\cite{Gibbs1902},
as well as their generalizations~\cite{Hetherington1987,Challa1988_PRL,Challa1988_PRA,Johal2003,Ray1991,Gerling1993,Tsallis1988,Beck2003,Cohen2004,Costeniuc2005,Costeniuc2006,Toral2006,Penrose1971,Ellis2000,Touchette2010,Yoneta2019,Yoneta2023}.
It is known that, as long as ensembles are appropriately chosen,
they yield equivalent results in the thermodynamic limit~\cite{Ruelle1999,MartinLof1979,Georgii1995,Lima1972,Mueller2015,Brandao2015,Tasaki2018}.
This equivalence allows one to select an ensemble based on computational convenience.
In this work, we exploit this flexibility to develop an efficient quantum algorithm for thermal state preparation.
To this end, we here review the formulation of statistical mechanics based on generalized ensembles.

\subsubsection{Canonical ensemble}
Among the numerous ensembles, the canonical ensemble is one of the most widely used.
To facilitate a comparison with generalized ensembles,
we first describe the canonical ensemble.
The density matrix for the canonical ensemble is given by
\begin{align}
    \rho_N^\mathrm{can}(\beta)
    = \frac{e^{- \beta H_N}}{\Z_N^\mathrm{can}(\beta)},
    \label{eq:def_density-matrix_canonical}
\end{align}
where $\Z_N^\mathrm{can}(\beta) = \mathrm{Tr}[e^{-\beta H_N}]$ is the partition function.
The canonical ensemble is parameterized by $\beta$,
which corresponds to the inverse temperature $1/T$ of the equilibrium state described by $\rho_N^\mathrm{can}(\beta)$~\cite{Callen1985}.
Therefore, the expectation values of local observables in $\rho_N^\mathrm{can}(\beta)$
coincide with their thermal expectation values at the inverse temperature $\beta$.
In particular, the energy density at the inverse temperature $\beta$ is given by
\begin{align}
    u^\mathrm{can}(\beta)
    = \lim_{N\to\infty} \mathrm{Tr} \left[ \rho_N^\mathrm{can}(\beta) H_N / N \right],
\end{align}
in the thermodynamic limit.

Using the partition function, the Helmholtz free energy is expressed as
\begin{align}
    f(\beta) = \lim_{N\to\infty} - \frac{1}{N\beta} \log \Z_N^\mathrm{can}(\beta).
    \label{eq:def_free-energy_canonical}
\end{align}
Therefore, from thermodynamics, the partition function is related to the thermodynamic entropy density $s$ via
\begin{align}
    s(u^\mathrm{can}(\beta))
    = \lim_{N\to\infty} \frac{1}{N} \log \Z_N^\mathrm{can}(\beta)
    + \beta u^\mathrm{can}(\beta).
    \label{eq:entropy-formula_canonical}
\end{align}

\subsubsection{Generalized ensemble} \label{sec:generalized-ensemble}
Beyond the canonical ensemble, numerous ensembles have been proposed.
In this paper, we consider a quite broad class of generalized ensembles~\cite{Costeniuc2005,Costeniuc2006,Yoneta2019},
for which the density matrix and partition function are given by
\begin{align}
    \rho_N^\eta &= \frac{e^{- N \eta(H_N/N)}}{\Z_N^\eta}, \label{eq:def_density-matrix}\\
    \Z_N^\eta &= \mathrm{Tr} \left[ e^{- N \eta(H_N/N)} \right]. \label{eq:def_partition-function}
\end{align}
Here, $\eta$ is a function independent of $N$ that satisfies the condition that $\s(u) - \eta(u)$ is strongly concave.
As mentioned in Section~\ref{sec:setup},
we assume that the thermodynamic entropy density $\s$ is concave,
which is a requirement for consistency with thermodynamic principles.
Thus, this condition is always satisfied if $\eta$ is chosen to be strongly convex.
Under this condition, the generalized ensemble describes the equilibrium state specified by the energy density $u_\mathrm{max}^\eta$,
which is defined as the unique maximum point of $\s - \eta$, in the thermodynamic limit:
\begin{align}
    u_\mathrm{max}^\eta = \argmax_{u} [\s(u)-\eta(u)].
    \label{eq:u-eta_argmax-s-minus-eta}
\end{align}
That is,
\begin{align}
    u^\eta = u_\mathrm{max}^\eta,
\end{align}
where
\begin{align}
    u^\eta = \lim_{N\to\infty} \mathrm{Tr} \left[ \rho_N^\eta H_N / N \right].
\end{align}
Therefore, one can obtain the desired equilibrium state by choosing $\eta$
such that $u_\mathrm{max}^\eta$ coincides with the energy density of the target state.

In addition, genuine thermodynamic quantities
such as the temperature and thermodynamic entropy
can also be computed using the generalized ensemble.
In contrast to the canonical ensemble, where the inverse temperature $\beta$ is a parameter,
in the generalized ensemble, $\beta$ is derived as a statistical mechanical quantity.
Unlike in the microcanonical ensemble,
where the inverse temperature is obtained by differentiating the thermodynamic entropy,
it can be computed directly from the expectation value of the energy density as
\begin{align}
    \beta = \eta'(u^\eta).
    \label{eq:temperature-formula}
\end{align}
Moreover, the thermodynamic entropy density $\s$ can be computed as
\begin{align}
    \s(u^\eta) = \lim_{N\to\infty} \frac{1}{N} \log \Z_N^\eta + \eta(u^\eta).
    \label{eq:entropy-formula}
\end{align}
Therefore, one can obtain all thermodynamic quantities using generalized ensembles,
as they can be computed from the thermodynamic entropy~\cite{Callen1985}.

In the class of generalized ensembles defined by Eqs.~\eqref{eq:def_density-matrix} and \eqref{eq:def_partition-function},
various ensembles are included as special cases.
In particular, by choosing
\begin{align}
  \eta(u) = \beta u + \mathrm{const.}, \label{eq:eta_canonical}
\end{align}
the generalized ensemble reduces to the canonical ensemble defined by Eq.~\eqref{eq:def_density-matrix_canonical}.
In this case, the temperature formula~\eqref{eq:temperature-formula} is consistent with the fact that the canonical ensemble describes the equilibrium state with the inverse temperature $\beta$.
Furthermore, Eq.~\eqref{eq:entropy-formula} reproduces the entropy formula for the canonical ensemble, presented in Eq.~\eqref{eq:entropy-formula_canonical}.
Thus, the canonical ensemble is a special case of the generalized ensemble.
However, in general, $\eta$ is not limited to linear functions.
Among the infinitely many possible choices of $\eta$,
one can select most suitable one
depending on the computational algorithm employed
or the physical situation under consideration.

\begin{example}[Gaussian ensemble~\cite{Hetherington1987,Challa1988_PRL,Challa1988_PRA,Johal2003,Costeniuc2005}]
Consider the generalized ensemble associated with the specific choice of $\eta$,
\begin{align}
    \eta(u) = \frac{1}{2} \lambda (u - \mu)^2,
\end{align}
where $\lambda$ is a positive constant, and $\mu$ is an ensemble parameter.
This choice of $\eta$ defines the so-called Gaussian ensemble
introduced in the earlier work by Hetherington~\cite{Hetherington1987}.
From Eq.~\eqref{eq:temperature-formula}, the inverse temperature $\beta$ is given as
\begin{align}
    \beta = \lambda ( u^\eta - \mu ).
\end{align}
Solving this for $\mu$, we find that setting $\mu = u^\eta - \beta / \lambda$ yields the equilibrium state at the inverse temperature $\beta$.
Conversely, as long as this condition is satisfied,
the parameter $\lambda$ can be chosen arbitrarily,
and any choice of $\lambda$ yields the equivalent result in the thermodynamic limit.
However, as will be shown in Section~\ref{sec:cost_TDL},
an appropriate choice of $\lambda$ is crucial for reducing the computational cost in quantum algorithms for thermal state preparation.
\end{example}

In the following, we focus on the case where $\eta$ is a polynomial.
This is because our algorithm is based on quantum singular value transformation,
which implements polynomial transformations,
implying that any function $\eta$ would ultimately need to be approximated by a polynomial.
We simplify the algorithm by restricting our discussion to polynomials from the outset.

\subsection{Quantum Singular Value Transformation} \label{sec:QSVT}
The optimal ensemble for thermal state preparation depends on the algorithmic framework employed.
In this work, we employ the quantum singular value transformation (QSVT) framework~\cite{Gilyen2019,Martyn2021}.
QSVT is a quantum algorithm that applies polynomial transformations to the singular values of a given matrix.
It was recently demonstrated that QSVT provides a unified framework
for understanding many prominent quantum algorithms discovered so far,
including Hamiltonian simulation~\cite{Low2017_2,Low2019,Gilyen2019},
amplitude amplification and estimation~\cite{Gilyen2019,Rall2023_1},
the HHL algorithm for solving quantum linear systems~\cite{Gilyen2019,Martyn2021},
and quantum phase estimation~\cite{Martyn2021,Rall2021}.
As such, QSVT serves as a primitive algorithm for designing efficient quantum algorithms.
In this section, we briefly describe QSVT techniques that will be used in our work.

\subsubsection{Quantum signal processing}
Before describing QSVT, we review quantum signal processing (QSP).
QSP provides efficient implementation of polynomial transformations of scalars,
and, consequently, serves as a foundational technique for polynomial transformations of the singular values of matrices.

Given a single-qubit reflection unitary that encodes $x \in [-1, +1]$, defined as
\begin{align}
    R(x)
    = \begin{pmatrix}
        x & \sqrt{1 - x^2} \\
        \sqrt{1 - x^2} & -x
    \end{pmatrix},
\end{align}
we consider constructing a unitary operator that encodes a polynomial transformation of $x$ as a product of $R(x)$ and single-qubit rotations.
\begin{theorem}[Quantum signal processing using reflections~\cite{Low2016,Gilyen2019}] \label{theorem:QSP}
Let $P(x) \in \mathbb{C}[x]$ be a degree-$d$ polynomial such that
\begin{enumerate}
    \item $P$ has parity $(d \bmod 2)$, \label{cond:QSP_1}
    \item for all $x \in [-1, +1]$, $|P(x)| \leq 1$, \label{cond:QSP_2}
    \item for all $x \in (-\infty, -1] \cup [+1, +\infty)$, $|P(x)| \geq 1$, \label{cond:QSP_3}
    \item if $d$ is even, then for any $x \in \mathbb{R}$, $P(ix) P^*(ix) \geq 1$. \label{cond:QSP_4}
\end{enumerate}
Then there exists $\Phi = (\phi_1, \phi_2, \cdots, \phi_d) \in \mathbb{R}^d$ such that
\begin{align}
    \prod_{j=1}^d \left( e^{i \phi_j Z} R(x) \right)
    = \begin{pmatrix}
        P(x) & \cdot \\
        \cdot & \cdot
    \end{pmatrix}.
\end{align}
\end{theorem}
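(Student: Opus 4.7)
The plan is induction on the degree $d$, with the statement strengthened to track an auxiliary "complementary polynomial" $Q$. A direct expansion of the product shows that every sequence $\Phi$ necessarily realizes a unitary of the form
$$\prod_{j=1}^d \left( e^{i \phi_j Z} R(x) \right) = \begin{pmatrix} P(x) & i Q(x) \sqrt{1 - x^2} \\ i Q^*(x) \sqrt{1 - x^2} & P^*(x) \end{pmatrix},$$
in which $P, Q \in \mathbb{C}[x]$ satisfy $\deg P \le d$, $\deg Q \le d-1$, have parities $d \bmod 2$ and $(d-1) \bmod 2$, and obey the polynomial unitarity identity $P(x) P^*(x) + (1 - x^2) Q(x) Q^*(x) = 1$. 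It is therefore natural to strengthen the theorem to the equivalence: a phase sequence realizing $P$ exists if and only if a polynomial $Q$ with these properties exists.

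The first step is to extract such a $Q$ from a given $P$ satisfying conditions (\ref{cond:QSP_1})--(\ref{cond:QSP_4}). Condition (\ref{cond:QSP_2}) yields $1 - P(x) P^*(x) \ge 0$ on $[-1, 1]$, while combining (\ref{cond:QSP_2}) and (\ref{cond:QSP_3}) at $x = \pm 1$ forces $|P(\pm 1)| = 1$, so $1 - P(x) P^*(x)$ vanishes there and is divisible by $1 - x^2$. Writing $1 - P(x) P^*(x) = (1 - x^2) S(x)$, one verifies that $S$ is nonnegative on $[-1, 1]$. A Fej\'er--Riesz style factorization of $S$, obtained by passing to a Laurent polynomial via $x = (z + z^{-1})/2$, then produces a polynomial $Q$ with $Q(x) Q^*(x) = S(x)$. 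Condition (\ref{cond:QSP_4}) is invoked precisely when $d$ is even, where it pins down the leading behavior of $1 - P P^*$ at infinity and thereby ensures that $Q$ has degree exactly $d - 1$ with the correct parity.

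Given existence of $Q$, the inductive peeling completes the argument. The base case $d = 1$ is immediate: $P$ is odd of degree at most $1$ with $|P(\pm 1)| = 1$, so $P(x) = e^{i \phi_1} x$, realized directly by $e^{i \phi_1 Z} R(x)$. For the inductive step, one inspects the leading coefficients of $P$ and $Q$ and chooses $\phi_d$ so that right-multiplying the target matrix by $R(x) e^{-i \phi_d Z}$, which is the inverse of the last factor $e^{i \phi_d Z} R(x)$, yields a matrix of the same block form but with a new pair $(\tilde P, \tilde Q)$ of degrees at most $d - 1$ and $d - 2$ respectively. A short calculation confirms that $(\tilde P, \tilde Q)$ again obeys the unitarity identity together with the analogs of conditions (\ref{cond:QSP_1})--(\ref{cond:QSP_4}), so the inductive hypothesis supplies $(\phi_1, \ldots, \phi_{d - 1})$.

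The main obstacle is the factorization step: one must produce $Q$ with exactly the right degree and parity, not merely some polynomial whose squared modulus equals $S$. The parity comes for free when $d$ is odd because $S$ is then a polynomial in $x^2$, but when $d$ is even one must rule out a spurious drop in degree, which is exactly what condition (\ref{cond:QSP_4}) prevents. Once the correct $Q$ has been extracted, the remaining algebra in the peeling step is essentially bookkeeping and the induction closes.
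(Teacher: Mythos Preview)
The paper does not supply its own proof of this theorem: it is stated as a known result with citations to \cite{Low2016,Gilyen2019} and used as a black box in the preliminaries. Your sketch is the standard argument from those references---strengthen to track a complementary polynomial $Q$, extract $Q$ from conditions (\ref{cond:QSP_1})--(\ref{cond:QSP_4}) via a Fej\'er--Riesz factorization of $(1-PP^*)/(1-x^2)$, then peel phases inductively---so there is nothing to compare against within this paper, and your outline matches the cited literature.
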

By focusing on the real part of the polynomial $P$,
we obtain an intuitive condition for the class of polynomials that can be constructed.
\begin{theorem}[Real quantum signal processing~\cite{Gilyen2019}] \label{theorem:QSP_real}
Let $P_R(x) \in \mathbb{R}[x]$ be a degree-$d$ polynomial such that
\begin{enumerate}
    \item $P_R$ has parity $(d \bmod 2)$, \label{cond:QSP_real_1}
    \item for all $x \in [-1, +1]$, $|P_R(x)| \leq 1$. \label{cond:QSP_real_2}
\end{enumerate}
Then there exists a $P \in \mathbb{C}[x]$ with real part $P_R$,
satisfying conditions~\ref{cond:QSP_1}-\ref{cond:QSP_4} of Theorem~\ref{theorem:QSP}.
\end{theorem}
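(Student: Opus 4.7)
The plan is to write $P(x) = P_R(x) + i P_I(x)$ with a real polynomial $P_I \in \mathbb{R}[x]$ to be constructed. Under this ansatz, condition~\ref{cond:QSP_1} reduces to requiring that $P_I$ has parity $(d \bmod 2)$ and degree at most $d$; conditions~\ref{cond:QSP_2} and~\ref{cond:QSP_3} reduce to $P_R(x)^2 + P_I(x)^2 \le 1$ on $[-1,+1]$ and $\ge 1$ outside, respectively; and condition~\ref{cond:QSP_4} (for even $d$) becomes $P_R(ix)^2 + P_I(ix)^2 \ge 1$ for real $x$, since $P^*$ acts by conjugating coefficients and $P_R,P_I$ are real.

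First I would apply the Markov--Lukács representation theorem to the polynomial $1 - P_R(x)^2$, which by hypothesis~\ref{cond:QSP_real_2} is non-negative on $[-1,+1]$ and has degree at most $2d$. This supplies real polynomials $A,B$ with $\deg A \le d$ and $\deg B \le d-1$ such that
\begin{align}
    1 - P_R(x)^2 = A(x)^2 + (1 - x^2) B(x)^2.
\end{align}
Setting $P_I := A$ would immediately yield $|P(x)|^2 = 1 - (1-x^2) B(x)^2$, which is at most $1$ on $[-1,+1]$ and at least $1$ outside, handling conditions~\ref{cond:QSP_2} and~\ref{cond:QSP_3}.

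Second, I would refine this decomposition so that it respects parities. Because $P_R$ has parity $(d \bmod 2)$, the left-hand side $1 - P_R(x)^2$ is an even function of $x$, so one expects $A$ and $B$ individually to have definite parities. The cleanest route is to go through the Fejér--Riesz theorem: substitute $x = \cos\theta$, apply Fejér--Riesz to the nonnegative trigonometric polynomial $1 - P_R(\cos\theta)^2$ of degree $2d$ in $e^{i\theta}$, and split the resulting Laurent polynomial $R(e^{i\theta})$ into its $\cos\theta$- and $\sin\theta$-parts, obtaining the factorization above. The parity constraint on $P_R$ translates into the invariance of the absolute square under $\theta \mapsto \theta + \pi$, which after the splitting forces $A$ to carry parity $(d \bmod 2)$ and $B$ to carry parity $((d-1) \bmod 2)$. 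With these parities, $P_I = A$ satisfies condition~\ref{cond:QSP_1}.

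Finally, for even $d$, I would verify condition~\ref{cond:QSP_4} by evaluating at $ix$: since $B$ then has odd parity, $B(ix)$ is purely imaginary, so $B(ix)^2 \le 0$, and
\begin{align}
    P(ix) P^*(ix) = 1 - (1 + x^2) B(ix)^2 \ge 1
\end{align}
for all real $x$, as required. The main obstacle in this plan is the parity-refinement step: ensuring that the Lukács/Fejér--Riesz factorization can be arranged to place $A$ and $B$ in the correct parity classes rather than in generic real-polynomial form. I would address it by carefully grouping the roots of the associated Schur polynomial into pairs symmetric under $z \mapsto -z$, using the aforementioned $\theta \to \theta+\pi$ symmetry, which is exactly where the hypothesis on the parity of $P_R$ is used.
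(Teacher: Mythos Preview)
The paper does not supply its own proof of this theorem; it is quoted as a known result from Gily\'en et al.\ (2019), so there is nothing in the present paper to compare against directly. Your proposal is essentially the standard argument from that reference: factor $1 - P_R(x)^2 = A(x)^2 + (1-x^2)B(x)^2$ via a Fej\'er--Riesz/Luk\'acs decomposition, set $P_I = A$, and read off conditions~\ref{cond:QSP_2}--\ref{cond:QSP_4} from the identity $|P(x)|^2 = 1 - (1-x^2)B(x)^2$. The parity refinement you flag as the delicate step is indeed the crux, and your plan to enforce it by pairing the roots of the Fej\'er--Riesz factor under $z \mapsto -z$ (equivalently $\theta \mapsto \theta + \pi$) is exactly how the original proof proceeds. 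Your verification of condition~\ref{cond:QSP_4} via the odd parity of $B$ (so that $B(ix)^2 \le 0$) is also correct. In short, the outline is sound and matches the proof the paper is citing.
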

That is, by completing an appropriate imaginary part,
any real polynomial of definite parity can be implemented by QSP.
Importantly, one can find the coefficients of polynomial $P$ and the QSP phases $\Phi$
in polynomial time with a numerically stable classical algorithm~\cite{Gilyen2019,Haah2019,Chao2020,Dong2021,Ying2022,Berntson2024,Alexis2024}.

\subsubsection{Block-encoding} \label{sec:block-encoding}
QSVT is an extension of QSP, generalizing its applicability from scalars to matrices.
In quantum computations, all matrices should be described by the unitary operators.
Block-encoding is a technique to represent general matrices, which are not necessarily unitary, with unitary operators.

Intuitively, a block-encoding embeds the target matrix into a submatrix of a larger unitary matrix:
\begin{align}
    A \longmapsto U
    = \begin{pmatrix}
        A & \cdot\\
        \cdot & \cdot
    \end{pmatrix}
\end{align}
This is analogous to the embedding used in QSP, where scalars,
whose absolute values are not necessarily $1$,
are encoded in a single-qubit reflection unitary $R(x)$.

We give the formal definition of a block-encoding with the presence of errors as follows.
\begin{definition}[Block-encoding of a square matrix~\cite{Gilyen2019}] \label{def:block-encoding}
Suppose that $A$ is an $s$-qubit operator, $\alpha, \epsilon \in \mathbb{R}_{>0}$ and $a \in \mathbb{Z}_{\geq 0}$.
We say that the $(s+a)$-qubit unitary $U$ is an $(\alpha, a, \epsilon)$-block-encoding of $A$, if
\begin{align}
    \left\|
        A - \alpha (\bra{0}^{\otimes a} \otimes \I[s]) U (\ket{0}^{\otimes a} \otimes \I[s])
    \right\|
    \leq \epsilon.
\end{align}
\end{definition}
Here $a$ is the number of ancilla qubits for block-encoding,
and $\alpha$ is the subnormalization factor.
Due to the unitarity of $U$, the spectral norm of a submatrix of $U$ is upper bounded by $1$.
Consequently, if the target matrix $A$ has a norm greater than $1$,
we necessarily subnormalize $A$ with $\alpha$ so that $\| A \| \leq \alpha + \epsilon$.

The algorithm developed in this study is implemented by repeatedly querying the block-encoding of the Hamiltonian $H_N$, denoted by $U_H$,
and the overall computational cost is almost proportional to the number of queries to $U_H$.
Therefore, in this work, we assume access to $U_H$
and measure the computational cost of our algorithm in terms of the number of queries to $U_H$.
This approach enables us to assess the computational cost
without delving into the microscopic details of the system,
allowing us to concentrate on the search for the optimal ensemble.

\begin{example}
Let $A$ be an $s$-qubit operator
and let $\alpha \geq \|A\|$.
Consider the $(s+1)$-qubit operator
\begin{align}
    U = \begin{pmatrix}
        A/\alpha & \sqrt{\I[s] - (A/\alpha)(A/\alpha)^\dagger}\\
        \sqrt{\I[s] - (A/\alpha)^\dagger(A/\alpha)} & - (A/\alpha)^\dagger
    \end{pmatrix}.
\end{align}
A straightforward calculation shows that this operator satisfies
\begin{align}
    U U^\dagger = U^\dagger U = \I[(s+1)], \qquad
    (\bra{0} \otimes \I[s]) U (\ket{0} \otimes \I[s]) = A/\alpha.
\end{align}
Therefore, $U$ is an $(\alpha, 1, 0)$-block-encoding of $A$.
\end{example}

\subsubsection{Quantum singular value transformation}
We now describe QSVT.
First, we provide the definition of singular value transformation of matrices.
\begin{definition}[Singular value transformation~\cite{Gilyen2019}] \label{def:singular-value-transformation}
Let $A$ be a matrix with the singular value decomposition
\begin{align}
    A = \sum_i \varsigma_i \ket{\tilde{\psi_i}}\bra{\psi_i},
\end{align}
and let $f$ be a function of definite parity.
The singular value transformation of $A$ for $f$ is defined as
\begin{align}
    P^{(SV)}(A)
    = \begin{cases}
        \displaystyle \sum_i f(\varsigma_i) \ket{\tilde{\psi_i}}\bra{\psi_i} & \text{if $f$ is odd},\\ \\
        \displaystyle \sum_i f(\varsigma_i) \ket{\psi_i}\bra{\psi_i} & \text{if $f$ is even}.
    \end{cases}
\end{align}
\end{definition}

Given a block-encoding $U$ of a general matrix $A$,
QSVT enables the singular value transformation of $A$ for a polynomial $P$ of definite parity.
\begin{theorem}[Quantum singular value transformation~\cite{Gilyen2019,Martyn2021}] \label{theorem:QSVT}
Let $\mathcal{H}$ be a finite-dimensional Hilbert space.
Let $U$ be a unitary operator on $\mathcal{H}$,
and let $\Pi$ and $\tilde{\Pi}$ be orthogonal projectors on $\mathcal{H}$.
Let $P \in \mathbb{C}[x]$ be a degree-$d$ polynomial satisfying \ref{cond:QSP_1}-\ref{cond:QSP_4} of Theorem~\ref{theorem:QSP},
and let $\Phi \in \mathbb{R}^d$ be the corresponding sequence of phases.
Then
\begin{align}
    P^{(SV)} (\tilde{\Pi} U \Pi)
    = \begin{cases}
        \tilde{\Pi} U_{\Phi} \Pi & \text{if $d$ is odd},\\
        \Pi U_{\Phi} \Pi & \text{if $d$ is even}.
    \end{cases}
\end{align}
Here,
\begin{align}
    U_{\Phi}
    = \begin{cases}
        \displaystyle e^{i \phi_1 (2 \tilde{\Pi} - \I_\mathcal{H})} U \prod_{j = 1}^{(d-1)/2} \left( e^{i \phi_{2j} (2 \Pi - \I_\mathcal{H})} U^\dagger e^{i \phi_{2j+1} (2 \tilde{\Pi} - \I_\mathcal{H})} U \right) & \text{if $d$ is odd},\\
        \displaystyle \prod_{j = 1}^{d/2} \left( e^{i \phi_{2j-1} (2 \Pi - \I_\mathcal{H})} U^\dagger e^{i \phi_{2j} (2 \tilde{\Pi} - \I_\mathcal{H})} U \right) & \text{if $d$ is even}.
    \end{cases}
\end{align}
\end{theorem}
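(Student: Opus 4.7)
The plan is to reduce the matrix identity to $d$ scalar applications of Theorem~\ref{theorem:QSP} by block-diagonalizing $U$ together with the reflections $2\Pi - \I_\mathcal{H}$ and $2\tilde{\Pi} - \I_\mathcal{H}$ into two-dimensional invariant pairs indexed by the singular vectors of $\tilde{\Pi} U \Pi$. Starting from the singular value decomposition $\tilde{\Pi} U \Pi = \sum_i \varsigma_i \ket{\tilde{\psi_i}}\bra{\psi_i}$, with $\ket{\psi_i} \in \mathrm{range}(\Pi)$ and $\ket{\tilde{\psi_i}} \in \mathrm{range}(\tilde{\Pi})$, I would introduce companion vectors $\ket{\psi_i^\perp} \in \mathrm{range}(\I_\mathcal{H} - \Pi)$ and $\ket{\tilde{\psi_i}^\perp} \in \mathrm{range}(\I_\mathcal{H} - \tilde{\Pi})$ obtained by normalizing the off-diagonal residues of $U^\dagger \ket{\tilde{\psi_i}}$ and $U \ket{\psi_i}$. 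A direct calculation then gives
\begin{align}
U \ket{\psi_i} = \varsigma_i \ket{\tilde{\psi_i}} + \sqrt{1-\varsigma_i^2}\, \ket{\tilde{\psi_i}^\perp}, \qquad U \ket{\psi_i^\perp} = \sqrt{1-\varsigma_i^2}\, \ket{\tilde{\psi_i}} - \varsigma_i \ket{\tilde{\psi_i}^\perp},
\end{align}
so that $U$ restricted to $\mathrm{span}\{\ket{\psi_i}, \ket{\psi_i^\perp}\}$ is, in the basis $(\ket{\tilde{\psi_i}}, \ket{\tilde{\psi_i}^\perp})$ on the target side, exactly the reflection $R(\varsigma_i)$ of Theorem~\ref{theorem:QSP}.

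Next I would observe that by construction $2\Pi - \I_\mathcal{H}$ acts as Pauli $Z$ on the right block basis $(\ket{\psi_i}, \ket{\psi_i^\perp})$, while $2\tilde{\Pi} - \I_\mathcal{H}$ acts as Pauli $Z$ on the left block basis $(\ket{\tilde{\psi_i}}, \ket{\tilde{\psi_i}^\perp})$. Each alternating factor $e^{i\phi_k(2\Pi - \I_\mathcal{H})} U^\dagger$ and $e^{i\phi_k(2\tilde{\Pi} - \I_\mathcal{H})} U$ appearing in the definition of $U_\Phi$ therefore preserves the pairing and restricts to the scalar QSP factor $e^{i\phi_k Z} R(\varsigma_i)$ on each block. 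The parity case split in the statement is then immediate: for odd $d$ the product ends on the target side, justifying the sandwich $\tilde{\Pi} U_\Phi \Pi$, while for even $d$ it returns to the source side, giving $\Pi U_\Phi \Pi$. Applying Theorem~\ref{theorem:QSP} block-by-block extracts $P(\varsigma_i)$ as the $(1,1)$ entry in the relevant basis, and reassembling over $i$ yields either $\sum_i P(\varsigma_i) \ket{\tilde{\psi_i}}\bra{\psi_i}$ or $\sum_i P(\varsigma_i) \ket{\psi_i}\bra{\psi_i}$, which is exactly $P^{(SV)}(\tilde{\Pi} U \Pi)$ of Definition~\ref{def:singular-value-transformation}.

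The main obstacle is the proper treatment of degenerate blocks: singular values with $\varsigma_i \in \{0, 1\}$, together with vectors in $\mathrm{range}(\Pi)$ annihilated by $\tilde{\Pi} U$ and vectors in $\mathrm{range}(\tilde{\Pi})$ annihilated by $\Pi U^\dagger$ that contribute no singular value at all. On these subspaces the companion vectors collapse and the two-dimensional structure degenerates into a one-dimensional one, so the identification with a standard QSP block is no longer literal. One must instead verify that $U_\Phi$ acts on each such one-dimensional piece as a scalar equal to $P(0)$ (for the projector kernels and $\varsigma_i = 0$) or $P(\pm 1)$ (for $\varsigma_i = 1$); this is precisely where conditions~\ref{cond:QSP_3} and~\ref{cond:QSP_4} of Theorem~\ref{theorem:QSP} enter, forcing the boundary values of $P$ to make the reassembled formula consistent with the singular value transformation on all of $\mathcal{H}$.
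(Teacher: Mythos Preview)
The paper does not prove Theorem~\ref{theorem:QSVT}; it is quoted from Refs.~\cite{Gilyen2019,Martyn2021} and used as a black box in the subsequent constructions. Your sketch is essentially the standard proof given in those references: the two-dimensional invariant-subspace (Jordan's lemma / CS) decomposition indexed by singular vectors, followed by a blockwise application of Theorem~\ref{theorem:QSP}. The computation you outline for $U\ket{\psi_i}$ and $U\ket{\psi_i^\perp}$ is correct, and the identification of the restricted action of each alternating factor with $e^{i\phi_k Z}R(\varsigma_i)$ is exactly the mechanism that makes QSVT work.

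One correction to your final paragraph: the degenerate one-dimensional blocks are not where conditions~\ref{cond:QSP_3} and~\ref{cond:QSP_4} of Theorem~\ref{theorem:QSP} enter. Those conditions are constraints on $P$ needed for the \emph{existence} of the phase sequence $\Phi$; once $\Phi$ is fixed, the scalar identity of Theorem~\ref{theorem:QSP} already holds for every $x\in[-1,1]$, endpoints included. The one-dimensional pieces are handled by direct computation: on a $\varsigma_i=1$ block the alternating product collapses to the scalar $e^{i\sum_j\phi_j}$, which equals $P(1)$ simply because $R(1)$ has $(1,1)$ entry $1$; on kernel vectors in $\mathrm{range}(\Pi)$ one checks consistency with $P(0)$, which vanishes automatically for odd $d$ by the parity condition~\ref{cond:QSP_1} and is read off from the even-$d$ QSP identity at $x=0$. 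No further analytic hypothesis on $P$ is invoked at that stage.
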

Therefore, when $U$ is a block-encoding of $A$ as $A / \alpha = \tilde{\Pi} U \Pi$ with $\tilde{\Pi} = \Pi = {\ket{0}\bra{0}}^{\otimes a} \otimes \I[s]$, the unitary $U_{\Phi}$ in this theorem gives a block-encoding of $P^{(SV)}(A / \alpha)$, i.e.,
\begin{align}
    P^{(SV)}(A / \alpha) = (\bra{0}^{\otimes a} \otimes \I[s]) U_{\Phi} (\ket{0}^{\otimes a} \otimes \I[s]).
\end{align}
It is worth noting that the operator $e^{i \phi (2 \Pi - \I_\mathcal{H})}$
(resp. $e^{i \phi (2 \tilde{\Pi} - \I_\mathcal{H})}$)
can be implemented efficiently using a single ancilla qubit,
two $\cnot{\Pi}$ (resp. $\cnot{\tilde{\Pi}}$),
and a single-qubit rotation~\cite{Gilyen2019},
using the relation
\begin{align}
    \sum_{b=0,1} \ket{b}\bra{b} \otimes e^{(-1)^b i \phi (2 \Pi - \I_\mathcal{H})}
    = \cnot{\Pi} (e^{-i \phi Z} \otimes \I_\mathcal{H}) \cnot{\Pi}.
    \label{eq:exp-i-phi-Pi}
\end{align}
Here, $\cnot{\Pi}$ is the $\Pi$-controlled NOT gate,
which is defined for an orthogonal projection operator $\Pi$ on $\mathcal{H}$ as
\begin{align}
    \cnot{\Pi} = X \otimes \Pi + \I \otimes (\I_\mathcal{H} - \Pi).
\end{align}

When $A$ is Hermitian, singular value transformation reduces to eigenvalue transformation.
In this case, the symmetry of $A$ allows us to remove the constraints on the parity of the polynomial.
This result can be summarized,
including an evaluation of the robustness of the transformation, as follows:
\begin{theorem}[Polynomial eigenvalue transformation of arbitrary parity~\cite{Gilyen2019}] \label{theorem:EVT}
Suppose that $U$ is an $(\alpha, a, \epsilon)$-block-encoding of a Hermitian matrix $A$.
If $P \in \mathbb{R}[x]$ is a degree-$d$ polynomial satisfying that
\begin{align}
    |P(x)| \leq 1/2 \label{eq:EVT_condition}
\end{align}
for all $x \in [-1,+1]$.
Then there exists a quantum circuit $\tilde{U}$,
which is an $(1,a+2,4d\sqrt{\epsilon/\alpha})$-encoding of $P(A/\alpha)$,
and which consists of $d-1$ applications of $U$ and $U^\dagger$,
a single application of either controlled-$U$ or controlled-$U^\dagger$,
and $\Order((a + 1)d)$ other one- and two-qubit gates.
\end{theorem}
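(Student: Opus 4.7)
My approach is to reduce the theorem to Theorems~\ref{theorem:QSP_real} and~\ref{theorem:QSVT} through a parity decomposition of $P$ combined with a linear-combination-of-unitaries (LCU) construction, and then to obtain the error bound from a stability analysis of QSVT.

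To begin, I decompose $P(x) = P_e(x) + P_o(x)$ with $P_e(x) = \tfrac{1}{2}(P(x) + P(-x))$ and $P_o(x) = \tfrac{1}{2}(P(x) - P(-x))$. The hypothesis $|P(x)| \le 1/2$ on $[-1,+1]$ yields $|P_e(x)|, |P_o(x)| \le 1/2$ there, so the rescaled polynomials $2P_e$ and $2P_o$ are real, of definite parity, and bounded by $1$ in absolute value on $[-1,+1]$; they therefore meet the hypotheses of Theorem~\ref{theorem:QSP_real}. Invoking that theorem provides complex polynomials $\tilde{P}_e, \tilde{P}_o$ of degree at most $d$ whose real parts are $2P_e$ and $2P_o$ and which satisfy conditions~\ref{cond:QSP_1}--\ref{cond:QSP_4} of Theorem~\ref{theorem:QSP}, together with corresponding phase sequences $\Phi_e, \Phi_o$. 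Plugging these into Theorem~\ref{theorem:QSVT} with $\Pi = \tilde{\Pi} = \ket{0}\bra{0}^{\otimes a} \otimes \I[s]$ yields QSVT circuits $U_{\Phi_e}, U_{\Phi_o}$ that block-encode $\tilde{P}_e(A/\alpha)$ and $\tilde{P}_o(A/\alpha)$ in the top-left block.

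Next I would combine these into a block-encoding of $P(A/\alpha)$ via two LCU stages that together introduce two extra ancillas, accounting for the $a+2$ in the statement. The first LCU averages $U_{\Phi_\bullet}$ with its adjoint to extract the real part: since $A$ is Hermitian and $\tilde{P}_\bullet$ has real part $2P_\bullet$, this produces block-encodings of $2P_e(A/\alpha)$ and $2P_o(A/\alpha)$. The second LCU sums the two parity pieces, yielding a $(1,a+2,\cdot)$-block-encoding of $\tfrac{1}{2}\bigl(2P_e(A/\alpha)+2P_o(A/\alpha)\bigr) = P(A/\alpha)$. By interleaving the two phase sequences through ancilla-controlled versions of the rotations appearing in Theorem~\ref{theorem:QSVT} (implemented via~\eqref{eq:exp-i-phi-Pi}), the two parity branches can share the same $d$ invocations of $U$ and $U^\dagger$; only one of those invocations must be controlled, namely the one demanded by the real-part LCU, giving the decomposition into $d-1$ plain queries plus one controlled query stated in the theorem. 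The remaining $\cnot{\Pi}$'s and single-qubit rotations contribute $\Order((a+1)d)$ auxiliary gates.

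The main obstacle is the error bound $4d\sqrt{\epsilon/\alpha}$. Because $U$ is only an $\epsilon$-approximate block-encoding, QSVT effectively produces a polynomial of $A' := \alpha\,\tilde{\Pi} U \Pi$ rather than of $A$, with $\|A - A'\| \le \epsilon$. A naive Lipschitz argument on $P$ would only give $d \cdot \epsilon/\alpha$, which has the wrong small-$\epsilon$ scaling. The improvement to $\sqrt{\epsilon/\alpha}$ exploits unitarity: by Jordan's lemma, $U$ acts on at most two-dimensional invariant subspaces between the ranges of $\Pi$ and $\tilde{\Pi}$, on each of which it is a $2\times 2$ rotation whose cosine is a singular value of $A'/\alpha$. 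An $\epsilon/\alpha$ perturbation of that cosine shifts the rotation angle by $\Order(\sqrt{\epsilon/\alpha})$, and degree-$d$ QSP applied to such a block amplifies the shift by at most a factor of $d$ for polynomials bounded by $1$ on $[-1,+1]$. I would import this robustness bound for QSVT (Lemma~22--23 of Gilyén et al.), apply it independently to the even and odd branches, and propagate it through the two LCU stages; the four contributions---two parity branches, each averaged with its adjoint, each incurring error $d\sqrt{\epsilon/\alpha}$---combine to the stated $4d\sqrt{\epsilon/\alpha}$.
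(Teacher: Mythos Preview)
Your proposal follows essentially the same route as the paper's Appendix~\ref{sec:EVT}: split $P$ into its even and odd parts, lift each via Theorem~\ref{theorem:QSP_real}, run QSVT, and combine the four branches (two parities, each with its complex conjugate) by a double LCU while sharing the $d$ queries to $U$ through ancilla-controlled phase rotations built from~\eqref{eq:exp-i-phi-Pi}. Your discussion of the $4d\sqrt{\epsilon/\alpha}$ robustness bound is in fact more detailed than the appendix, which simply imports it from Ref.~\cite{Gilyen2019}.

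Two small inaccuracies are worth correcting. First, the real-part extraction in the paper is not ``averaging $U_{\Phi_\bullet}$ with its adjoint'' but averaging $U_{\Phi}$ with $U_{-\Phi}$: negating all QSP phases produces the complex-conjugate polynomial, and this is precisely what the identity~\eqref{eq:exp-i-phi-Pi} implements. The distinction matters for the query count you claim, because $U_{\Phi}$ and $U_{-\Phi}$ contain the \emph{same} ordered sequence of $U$'s and $U^\dagger$'s (only the interspersed $Z$-rotations differ), whereas $U_{\Phi}$ and $U_{\Phi}^\dagger$ do not, so a literal adjoint-averaging LCU would not permit the query sharing you invoke. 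Second, the single controlled-$U$ (or controlled-$U^\dagger$) is not attributable to the real-part LCU but to the parity LCU: when $P$ has degree $d$, one parity component has degree $d$ and the other at most $d-1$, so the two QSVT sequences differ in length by exactly one query, and that extra query is the one conditioned on the parity ancilla (cf.\ Fig.~\ref{fig:EVT}).
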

For completeness, we provide the construction of the quantum circuit $\tilde{U}$ in Appendix~\ref{sec:EVT}.

Many existing algorithms naturally arise from QSVT
when applied with an appropriately chosen polynomial.
A prominent example of this is fixed-point amplitude amplification,
which plays a crucial role in preparing finite-temperature states~\cite{Poulin2009,Chiang2010,Chowdhury2017,vanApeldoorn2020,vanApeldoorn2019,Gilyen2019,Mizukami2023,Garratt2024}.
By taking the polynomial approximation of the sign function as in Ref.~\cite{Gilyen2019},
we prove the following theorem in Appendix~\ref{sec:proof_FPAA}:
\begin{theorem}[fixed-point amplitude amplification~\cite{Gilyen2019}] \label{theorem:FPAA}
Let $U$ be a unitary and $\Pi$ be an orthogonal projector
such that $\alpha \ket{\psi_G} = \Pi U \ket{\psi_0}$,
and $\alpha \geq \delta > 0$.
There exists a quantum circuit $\tilde{U}$
such that $\|\ket{\psi_G}-\tilde{U}\ket{\psi_0}\| \leq \epsilon$,
which uses a single ancilla qubit
and consists of $d$ applications of $U$ and $U^\dagger$,
$d+1$ applications of $\cnot{\Pi}$,
$d-1$ applications of $\cnot{\ket{\psi_0}\bra{\psi_0}}$,
and $\Order(d)$ other one- and two-qubit gates.
Here,
\begin{align}
    d
    = 2 \left\lceil \sqrt{t W \left( \frac{2^{16} k^2}{\pi t \epsilon^8} \right)} \right\rceil + 1
    = \Order\left( \frac{1}{\delta} \log \frac{1}{\epsilon} \right),
\end{align}
where
\begin{align}
    k = \frac{1}{\delta} \sqrt{\frac{1}{2} W\left( \frac{2^{11}}{\pi \epsilon^8} \right)}, \quad
    t = \left\lceil \max \left\{
        e^2 k^2 / 2, \log \frac{2^8 k}{\sqrt{\pi} \epsilon^4}
    \right\} \right\rceil.
\end{align}
Here, $W(x)$ is the Lambert-$W$ function, which satisfies $W(x) e^{W(x)} = x$.
\end{theorem}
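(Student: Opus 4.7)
The plan is to realize FPAA as a direct application of QSVT to a polynomial approximation of the sign function. Let $\Pi_0 := \ket{\psi_0}\bra{\psi_0}$ and consider the rank-one operator
\begin{align}
    A := \Pi U \Pi_0 = \alpha\,\ket{\psi_G}\bra{\psi_0},
\end{align}
whose only nonzero singular value is $\alpha\in[\delta,1]$, with left singular vector $\ket{\psi_G}$ and right singular vector $\ket{\psi_0}$. Under the identifications $\tilde{\Pi}\mapsto\Pi$ and $\Pi\mapsto\Pi_0$ in Theorem~\ref{theorem:QSVT}, the given unitary $U$ is a trivial $(1,0,0)$-block-encoding of $A$, so QSVT is directly applicable.

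I would first construct an odd real polynomial $P$ of degree $d$ obeying the QSP conditions of Theorem~\ref{theorem:QSP_real}, bounded by $1$ in modulus on $[-1,+1]$, and satisfying $|P(x)-\mathrm{sign}(x)|\le\epsilon'$ for all $|x|\in[\delta,1]$, with an internal tolerance $\epsilon'$ of order $\epsilon^2$. The standard two-step construction first approximates $\mathrm{sign}(x)$ by $\mathrm{erf}(kx)$, contributing error $\sim e^{-k^2\delta^2/2}$, and then approximates $\mathrm{erf}$ by a truncated Chebyshev (or Jacobi--Anger) expansion of degree $\Order(\sqrt{t\log(1/\epsilon')})$, where $t$ controls the effective Gaussian variance. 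Balancing the two error tails gives $k\sim(1/\delta)\sqrt{\log(1/\epsilon')}$ and $d\sim\sqrt{t\log(1/\epsilon')}$; inverting these transcendental relations produces the closed-form Lambert-$W$ expressions stated in the theorem.

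Second, applying Theorem~\ref{theorem:QSVT} with this polynomial yields a unitary $U_\Phi$ satisfying $\Pi U_\Phi \Pi_0 = P^{(SV)}(A) = P(\alpha)\ket{\psi_G}\bra{\psi_0}$, so that $\Pi U_\Phi\ket{\psi_0} = P(\alpha)\ket{\psi_G}$. Since $|P(\alpha)-1|\le\epsilon'$ and $U_\Phi$ is unitary, the orthogonal component $(\I_\mathcal{H}-\Pi)U_\Phi\ket{\psi_0}$ has norm at most $\sqrt{1-|P(\alpha)|^2}\le\sqrt{2\epsilon'}$, hence
\begin{align}
    \|\ket{\psi_G}-U_\Phi\ket{\psi_0}\|\le\epsilon'+\sqrt{2\epsilon'}\le\epsilon
\end{align}
after choosing $\epsilon'$ sufficiently small; the high powers of $\epsilon$ inside the Lambert-$W$ arguments in the theorem statement encode this tightening together with numerical constants accumulated across the two approximation steps.

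Finally, the resource count follows by inspecting the product in Theorem~\ref{theorem:QSVT} for odd $d$, which contains $(d+1)/2$ copies of $U$, $(d-1)/2$ copies of $U^\dagger$, $(d+1)/2$ projector exponentials $e^{i\phi(2\tilde{\Pi}-\I_\mathcal{H})}$, and $(d-1)/2$ exponentials $e^{i\phi(2\Pi-\I_\mathcal{H})}$. Substituting $\tilde{\Pi}\mapsto\Pi$ and $\Pi\mapsto\Pi_0$, and decomposing each exponential through Eq.~\eqref{eq:exp-i-phi-Pi} into two controlled-projector gates plus one single-qubit rotation sharing a single reusable ancilla, yields exactly $d$ queries to $U$ and $U^\dagger$, $d+1$ applications of $\cnot{\Pi}$, $d-1$ applications of $\cnot{\ket{\psi_0}\bra{\psi_0}}$, and $\Order(d)$ additional single-qubit gates. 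The main obstacle is the derivation of the explicit degree formula with the precise Lambert-$W$ constants, which requires careful optimization of the two competing exponential tails in the sign-function approximation; this is standard in the QSVT literature, and all remaining steps are algebraic bookkeeping.
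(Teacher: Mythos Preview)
Your outline matches the paper's proof in Appendix~\ref{sec:proof_FPAA} almost exactly: approximate $\mathrm{sign}$ by $\mathrm{erf}(kx)$ and truncate its Chebyshev series (Lemma~\ref{lemma:error-function}), apply Theorem~\ref{theorem:QSVT} to the rank-one operator $\Pi U\ket{\psi_0}\bra{\psi_0}$, and read off the gate counts from the odd-$d$ alternating product together with Eq.~\eqref{eq:exp-i-phi-Pi}. The one place where you are too quick is the step ``applying Theorem~\ref{theorem:QSVT} with this polynomial yields $\Pi U_\Phi\ket{\psi_0}=P(\alpha)\ket{\psi_G}$'' for your \emph{real} $P$: Theorem~\ref{theorem:QSVT} requires the full conditions~\ref{cond:QSP_1}--\ref{cond:QSP_4} of Theorem~\ref{theorem:QSP}, which a real sign-approximant generally does not satisfy, so what is actually implemented is the complex completion $P$ with $\Re P=P_R$ supplied by Theorem~\ref{theorem:QSP_real}, and the output amplitude is the complex number $P(\alpha)$, not $P_R(\alpha)$. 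The paper closes this by combining the QSP bound $|P(\alpha)|\le 1$ with $\Re[P(\alpha)]=P_R(\alpha)\ge 1-\epsilon^4/8$ to control $(\Im[P(\alpha)])^2\le 1-(1-\epsilon^4/8)^2$ and hence $|P(\alpha)-1|\le\epsilon^2/2$; this extra square root is precisely why the sign-approximation tolerance is set to order $\epsilon^4$ rather than the $\epsilon^2$ your Pythagorean estimate suggests, and is the origin of the $\epsilon^8$ factors inside the stated Lambert-$W$ arguments.
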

We explain the implications of this theorem.
Suppose that a projective measurement described by $\Pi$ on the output state of the quantum circuit $U$
yields the post-measurement state $\ket{\Psi_G}$ with probability $p = |\alpha|^2$.
If we naively repeat this process until $\Pi$ is successfully measured,
the expected number of queries to $U$ before success is $\Order(p^{-1})$.
In contrast, by employing the above algorithm,
we can construct a quantum circuit that directly outputs the desired state $\ket{\Psi_G}$
with only $\Order(p^{-1/2})$ queries to $U$.
Notably, the query complexity of $\Order \left( \frac{1}{\alpha} \log \frac{1}{\epsilon} \right)$ achieved by this algorithm
is optimal with respect to both the target state overlap $\alpha$ and the error $\epsilon$~\cite{Yoder2014}.

As will be shown later, the computational cost of amplitude amplification dominates that of the thermal state preparation algorithm considered in this paper.
Therefore, rather than providing a mere order-of-magnitude estimate,
we present a precise evaluation of the computational cost.
A detailed proof of the theorem is given in Appendix~\ref{sec:proof_FPAA},
which also serves as a reference for the construction of the quantum circuit $\tilde{U}$.

\subsubsection{Polynomial approximation of the exponential function}
Since the computational cost of QSVT is determined by the degree of the polynomial,
it is crucial to obtain a low-degree polynomial
that approximates the desired function for the transformation.
In particular, for the application in this paper,
the polynomial approximation of the exponential function plays a key role.
As shown in Ref.~\cite{Sachdeva2014},
the exponential function can be approximated by polynomials
whose degrees are essentially quadratically smaller.
Here, we review an expansion in the basis of Chebyshev polynomials, which provides a concise representation of the explicit form.

\begin{lemma}[Polynomial approximation of the exponential function $e^{- \lambda (x+1)}$~\cite{Low2017_1}] \label{lemma:exp-function}
For all $\lambda > 0$, the polynomial $p_{\exp, \lambda, n}(x)$ of degree $n$
\begin{align}
    p_{\exp, \lambda, n}(x)
    = e^{- \lambda} \left(
        I_0(\lambda)
        + 2 \sum_{j=1}^{n} I_j(\lambda) T_j(-x)
    \right)
    \label{eq:p_exp}
\end{align}
satisfies
\begin{align}
    \epsilon_{\exp, \lambda, n}
    &= \max_{x \in [-1, +1]}
    \left|
        p_{\exp, \lambda, n}(x)
        - e^{- \lambda (x+1)}
    \right|
    \leq 2 e^{- n^2 / 2 t} + e^{- \lambda - t}
\end{align}
for every integer $t \geq e^2 \lambda$.
Here, $I_j(x)$ are modified Bessel functions of the first kind,
and $T_j(x)$ are Chebyshev polynomials of the first kind.
\end{lemma}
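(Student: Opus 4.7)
The plan is to derive the stated bound by first writing $e^{-\lambda(x+1)}$ as an exact Chebyshev series whose degree-$n$ truncation is precisely $p_{\exp,\lambda,n}$, and then controlling the tail. The required identity is the Jacobi--Anger expansion
\begin{align}
    e^{\lambda\cos\theta} = I_0(\lambda) + 2\sum_{j=1}^\infty I_j(\lambda)\cos(j\theta).
\end{align}
Substituting $x = -\cos\theta$ so that $T_j(-x) = \cos(j\theta)$ on $[-1,+1]$, and multiplying by $e^{-\lambda}$, yields
\begin{align}
    e^{-\lambda(x+1)} = e^{-\lambda}\left(I_0(\lambda) + 2\sum_{j=1}^\infty I_j(\lambda)T_j(-x)\right).
\end{align}
Because $|T_j(-x)|\leq 1$ for $x\in[-1,+1]$, the approximation error immediately reduces to a tail bound on the Bessel coefficients:
\begin{align}
    \epsilon_{\exp,\lambda,n}\leq 2e^{-\lambda}\sum_{j=n+1}^\infty I_j(\lambda).
\end{align}

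Next, I would split the tail at the parameter $t$ into a middle portion $\sum_{n<j\leq t} I_j(\lambda)$ and a far portion $\sum_{j>t} I_j(\lambda)$, matching the two terms in the claimed estimate. For the far portion I would use the defining power series $I_j(\lambda) = \sum_{k\geq 0}(\lambda/2)^{j+2k}/(k!(j+k)!)$ together with the hypothesis $t\geq e^2\lambda$: once $j>t$, the successive-term ratio $(\lambda/2)/(j+1)$ is bounded by $1/(2e^2)$, so $I_j(\lambda)\leq (\lambda/2)^j e^{\lambda^2/(4(j+1))}/j!$ decays super-geometrically. Collecting the geometric series and using $I_0(\lambda)\leq e^\lambda$ to absorb the residual combinatorics, the far portion contributes at most $e^{-\lambda-t}$ after the $2e^{-\lambda}$ prefactor is included. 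For the middle portion I would use a Chernoff-type argument based on the generating-function identity $\sum_{j\in\mathbb{Z}} e^{j\alpha}I_j(\lambda) = e^{\lambda\cosh\alpha}$, which gives $I_j(\lambda)\leq e^{-j\alpha+\lambda\cosh\alpha}$ for any $\alpha>0$. Choosing $\alpha$ small and using $\cosh\alpha\leq 1+\alpha^2/2$ together with the window length $t-n$ reproduces a Gaussian-type bound whose variance is controlled by $t$ (rather than $\lambda$), yielding $2e^{-\lambda}\sum_{n<j\leq t}I_j(\lambda)\leq 2e^{-n^2/(2t)}$. Summing the two contributions then gives the stated inequality.

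The principal obstacle is getting the sharp constants in both tail bounds to line up. The Chernoff optimization naturally produces a Gaussian decay with effective variance $\Theta(\lambda)$, and the reason the statement allows the looser variance $t\geq e^2\lambda$ is precisely that one must reserve a factor of $e$ in the coefficient ratio to drive the far-tail geometric decay. Balancing these two competing demands---keeping enough slack past $t$ so that the far tail decays as $e^{-t}$ while still exploiting the truncation to replace $\lambda$ by $t$ in the Gaussian exponent---is the delicate step, and it is also what explains the specific threshold condition $t\geq e^2\lambda$ stated in the lemma.
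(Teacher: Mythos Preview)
The paper does not prove this lemma; it is quoted from the cited reference and only stated (Appendix~C establishes the separate fact $|p_{\exp,\lambda,n}|\leq 1$, not the approximation bound). So there is no in-paper argument to compare against, but your outline can still be assessed.

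Your reduction via Jacobi--Anger to the Bessel tail $2e^{-\lambda}\sum_{j>n}I_j(\lambda)$ and the split at $t$ are correct and match the standard route; the far-tail estimate is essentially right. The middle-tail step, however, contains a concrete error: the inequality $\cosh\alpha\leq 1+\alpha^2/2$ is false for every $\alpha\neq 0$, since the Taylor expansion gives $\cosh\alpha\geq 1+\alpha^2/2$. Your Chernoff bound requires an \emph{upper} bound on $\cosh\alpha-1$, so as written the step does not go through. The strategy is salvageable---after inserting $\lambda\leq t/e^2$ and optimizing $e^{\lambda(\cosh\alpha-1)-j\alpha}$ honestly one still obtains a Gaussian rate at least as good as $e^{-n^2/(2t)}$---but the mechanism you describe for why the variance is $t$ rather than $\lambda$ is not the correct one.

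The standard proof in the cited literature avoids this altogether via a probabilistic identity: $e^{-\lambda}I_j(\lambda)=\Pr[Y-2B_Y=j]$, where $Y\sim\mathrm{Poisson}(\lambda)$ and, conditionally, $B_Y\sim\mathrm{Binomial}(Y,\tfrac12)$. The tail $\Pr[Y-2B_Y>n]$ splits as $\Pr[Y>t]+\Pr[Y-2B_Y>n,\,Y\leq t]$. The first piece is a Poisson upper tail, and Chernoff with $t\geq e^2\lambda$ gives $\Pr[Y\geq t+1]\leq e^{-\lambda}\bigl(e\lambda/(t+1)\bigr)^{t+1}\leq e^{-\lambda-t-1}$, which after the prefactor $2$ yields the $e^{-\lambda-t}$ term. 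For the second piece, conditioning on $Y=k\leq t$ and applying Hoeffding to the symmetric binomial gives $\Pr[k-2B_k>n]\leq e^{-n^2/(2k)}\leq e^{-n^2/(2t)}$, producing $2e^{-n^2/(2t)}$. This makes transparent both why the threshold is $e^2\lambda$ (the Poisson large-deviation crossover) and why the Gaussian variance is $t$ (the cap on the number of Bernoulli summands)---precisely the point you flagged as delicate.
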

Therefore, if we set
\begin{align}
    t = \left\lceil \max \left\{
        e^2 \lambda,\log \frac{2}{\epsilon}
    \right\} \right\rceil, \quad
    n = \left\lceil \sqrt{2 t \log \frac{4}{\epsilon}} \right\rceil,
\end{align}
then it holds that $\epsilon_{\exp, \lambda, n} \leq \epsilon/2 + \epsilon/2 = \epsilon$.

\section{Thermal state preparation based on QSVT} \label{sec:algorithm}
We now present an algorithm within the QSVT framework for preparing a thermal equilibrium state on a quantum computer using a generalized ensemble.
The main result of this section is summarized in the following theorem:
\begin{theorem}[Thermal state preparation with the generalized ensemble] \label{theorem:main}
Suppose that the Hamiltonian $H_N$ is a Hermitian operator.
Let $\epsilon > 0$, $\eta \in \mathbb{R}$ be a degree-$d_\eta$ real polynomial.
Let $U_H$ be a block-encoding of $H_N$ using $a$ ancilla qubits,
\begin{align}
    H_N = \alpha N (\bra{0}^{\otimes a} \otimes \I[N]) U_H (\ket{0}^{\otimes a} \otimes \I[N]).
\end{align}
Then, we can implement a unitary circuit $U$ such that
\begin{align}
    \left\|
        \ket{0}^{\otimes(a+3)} \otimes \ket{\rho_N^\eta}
        - U \ket{0}^{\otimes (2N+a+3)}
    \right\|
    \leq \epsilon,
\end{align}
where $\ket{\rho_N^\eta}$ is a purification of the generalized ensemble $\rho_N^\eta \propto e^{- N \eta(H_N / N)}$ associated with $\eta$,
with $d_\eta d_\mathrm{exp} d_\mathrm{AA}$ queries to (controlled) $U_H$ and ${U_H}^\dagger$.
Here,
\begin{align}
    d_\mathrm{exp}
    = \left\lceil
        \sqrt{2 \left\lceil
            \max \left\{
                \frac{1}{4} e^2 N (\eta_\mathrm{max}-\eta_\mathrm{min}),
                \log\frac{2^3}{\epsilon \sqrt{\zeta}}
            \right\}
        \right\rceil
        \log\frac{2^4}{\epsilon \sqrt{\zeta}}}
    \right\rceil, \qquad
    d_\mathrm{AA}
    = 2 \left\lceil \sqrt{t W \left( \frac{2^{24} k^2}{\pi t \epsilon^8} \right)} \right\rceil + 1,
\end{align}
where
\begin{align}
    \eta_\mathrm{max} &= \max_{x\in[-1,+1]} \eta(\alpha x), \qquad
    \eta_\mathrm{min} = \min_{x\in[-1,+1]} \eta(\alpha x),\\
    \zeta = \frac{e^{N \eta_\mathrm{min}} \Z_N^\eta}{2^N}, \qquad
    k &= \frac{1}{1 - \epsilon / 2} \sqrt{\frac{2}{\zeta} W\left( \frac{2^{19}}{\pi \epsilon^8} \right)}, \qquad
    t = \left\lceil \max \left\{ e^2 k^2 / 2, \log \frac{2^{12} k}{\sqrt{\pi} \epsilon^4} \right\} \right\rceil.
\end{align}
\end{theorem}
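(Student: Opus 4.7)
The plan is to realize the purification $\ket{\rho_N^\eta}$ as the state obtained by applying the (non-unitary) operator $\tilde{M} := e^{-N(\eta(H_N/N) - \eta_\mathrm{min})/2}$ to one half of the maximally entangled state $\ket{\Phi} := 2^{-N/2}\sum_i \ket{i}\ket{i}$, to implement $\tilde{M}$ via QSVT starting from $U_H$, and to renormalize the resulting subnormalized amplitude with fixed-point amplitude amplification. Shifting the exponent by $\eta_\mathrm{min}$ ensures $\|\tilde{M}\| \leq 1$, so that $\tilde{M}$ admits a block-encoding with subnormalization one, while the exponentially small overall prefactor is precisely $\sqrt{\zeta}$, which is why $\zeta$ enters $d_\mathrm{AA}$ through the amplification factor.

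First, I rewrite $\tilde{M} = f(H_N/(\alpha N))$ with $f(x) := e^{-\lambda (u(x)+1)}$, where $u(x) := 2(\eta(\alpha x) - \eta_\mathrm{min})/(\eta_\mathrm{max} - \eta_\mathrm{min}) - 1$ takes values in $[-1,+1]$ and $\lambda := N(\eta_\mathrm{max} - \eta_\mathrm{min})/4$. Applying Lemma~\ref{lemma:exp-function} to the function $e^{-\lambda(\cdot + 1)}$ on $[-1,+1]$ at degree $d_\mathrm{exp}$, and then substituting the degree-$d_\eta$ polynomial $u(x)$, yields a real polynomial $\tilde{f}(x)$ of degree $d_\eta d_\mathrm{exp}$ that uniformly approximates $f$ on $[-1,+1]$ to a controlled error $\epsilon_\mathrm{apx}$. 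After dividing by $2$ so that $|\tilde{f}/2| \leq 1/2$, Theorem~\ref{theorem:EVT} applied to $U_H$ (an exact block-encoding of $H_N$ with subnormalization $\alpha N$) delivers a $(1, a+2, 0)$-block-encoding $U_f$ of $\tilde{f}(H_N/(\alpha N))/2$ using $d_\eta d_\mathrm{exp}$ queries to controlled $U_H$ and $U_H^\dagger$.

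Next, I prepare $\ket{\Phi}$ on two $N$-qubit registers using $N$ Hadamards and $N$ CNOTs and apply $U_f \otimes I$. A direct computation shows that projecting the ancillas onto $\ket{0}^{\otimes (a+2)}$ yields an unnormalized state of norm $\sqrt{\zeta}/2$ up to an $O(\epsilon_\mathrm{apx})$ error, whose normalization is $O(\epsilon_\mathrm{apx}/\sqrt{\zeta})$-close to $\ket{\rho_N^\eta}$. This arrangement is exactly the input required by Theorem~\ref{theorem:FPAA}, with state-preparation unitary given by the composition of the two steps above, success projector $\Pi = \ket{0}^{\otimes (a+2)}\bra{0}^{\otimes (a+2)}$, and amplitude approximately $\sqrt{\zeta}/2$. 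Invoking FPAA with $\delta$ chosen just below this value and the internal error set to a suitable fraction of $\epsilon$ produces the claimed unitary, using $d_\mathrm{AA}$ calls to $U_f$, hence $d_\eta d_\mathrm{exp} d_\mathrm{AA}$ queries to controlled $U_H$ and $U_H^\dagger$ in total.

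The main technical obstacle is the quantitative allocation of the overall error budget $\epsilon$. The Chebyshev approximation error $\epsilon_\mathrm{apx}$ is amplified by the factor $1/\sqrt{\zeta}$ because of the exponentially small block-encoded amplitude, so it must be chosen of order $\epsilon\sqrt{\zeta}$; the intrinsic FPAA error must be of order $\epsilon$; and a small additional margin has to be kept so that the lower bound $\delta$ used inside FPAA still holds after the amplitude is perturbed by the approximation. Propagating these choices back through Lemma~\ref{lemma:exp-function} and Theorem~\ref{theorem:FPAA} fixes the numerical factors $2^3$, $2^4$ appearing in $d_\mathrm{exp}$ and $(1-\epsilon/2)^{-1}$, $2^{19}$, $2^{24}$, $2^{12}$ appearing in $d_\mathrm{AA}$, completing the argument.
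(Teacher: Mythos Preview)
Your proposal is correct and follows essentially the same approach as the paper: prepare the maximally entangled state, block-encode $\tfrac{1}{2}e^{-\frac{1}{2}N[\eta(H_N/N)-\eta_\mathrm{min}]}$ by composing the Chebyshev expansion of $e^{-\lambda(x+1)}$ (with $\lambda=\tfrac{1}{4}N(\eta_\mathrm{max}-\eta_\mathrm{min})$) with the rescaled polynomial $\tilde\eta$ and invoking Theorem~\ref{theorem:EVT}, then apply fixed-point amplitude amplification with amplitude $\simeq\sqrt{\zeta}/2$. Your error-budget allocation (polynomial approximation error of order $\epsilon\sqrt{\zeta}$, FPAA error of order $\epsilon$, with a margin to protect the lower bound $\delta$) also matches the paper's explicit choices $\epsilon_\mathrm{exp}=\epsilon\sqrt{\zeta}/4$ and $\epsilon_\mathrm{AA}=\epsilon/2$.
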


In the following, we explicitly present an algorithm
for constructing the generalized ensemble
and provide a proof of Theorem~\ref{theorem:main}.
Here, we consider a general case where the block-encoding of $H_N$ contains errors.
Specifically, let $U_H$ be an $(\alpha N, a, \epsilon_H)$-block-encoding of the Hamiltonian $H_N$.

\begin{figure*}[t]
    \begin{minipage}{0.54\linewidth}
        \centering
        \includegraphics[height=5.5cm]{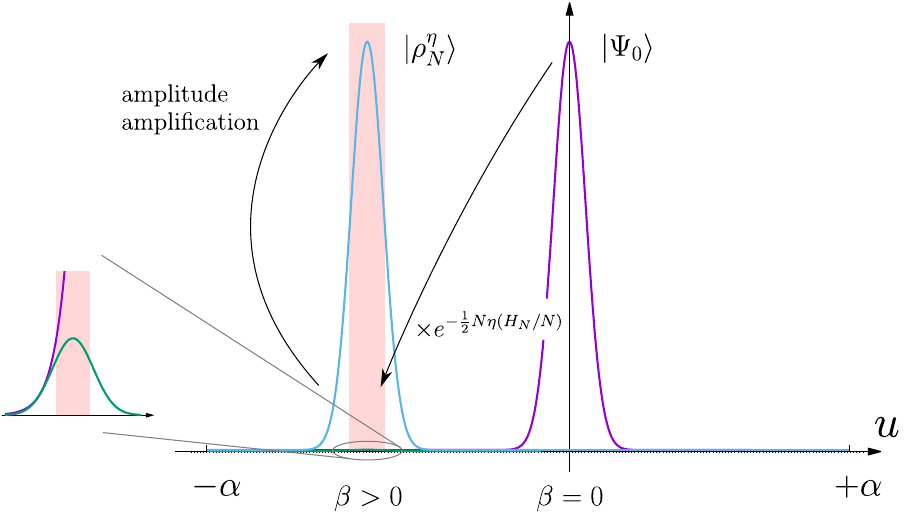}
        \subcaption{}
        \label{fig:distribution}
    \end{minipage}
    \hfill
    \begin{minipage}{0.44\linewidth}
        \centering
        \includegraphics[height=5.5cm]{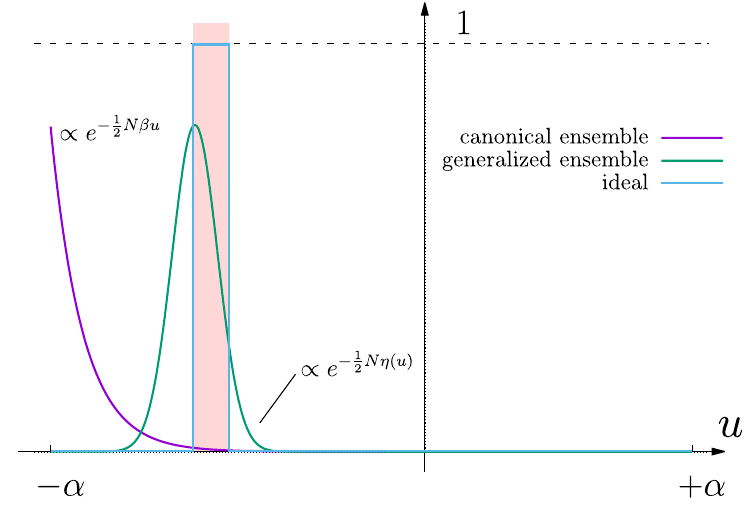}
        \subcaption{}
        \label{fig:filter}
    \end{minipage}
    \caption{%
        (\subref{fig:distribution})
        Transition of the energy density distribution in the thermal state preparation algorithm presented in Section~\ref{sec:algorithm}.
        First, in Step~\hyperref[sec:step1]{1}, the purification of the infinite-temperature state $\ket{\Psi_0}$ is prepared (purple curve).
        Next, in Step~\hyperref[sec:step2]{2}, the application of $e^{- \frac{1}{2} N \eta(H_N/N)}$ suppresses high-energy components (green curve),
        shifting the peak of the distribution toward the target energy region at finite temperature (red shaded area).
        Finally, in Step~\hyperref[sec:step3]{3}, amplitude amplification is performed so that the desired state can be obtained in a single shot (light blue curve).\\
        (\subref{fig:filter})
        Filter function applied to the expansion coefficients in the energy eigenbasis in Step~\hyperref[sec:step2]{2}.
        To be embedded into a unitary operator, the filter function must not exceed $1$ (gray dashed line).
        To reduce the computational cost of the subsequent amplitude amplification,
        the filter should be as large as possible, ideally close to $1$, in the target energy region (red shaded area).
        However, when the canonical ensemble is employed, the filter function inevitably becomes exponentially small, since it takes its maximum value at the ground state energy and decays exponentially from there (purple curve).
        In contrast, when generalized ensembles are employed, the filter function can be made significantly larger in the target energy region (green curve).
    }
\end{figure*}

\subsection*{Step~1: Preparation of a Maximally Entangled State} \label{sec:step1}
To coherently prepare the generalized ensemble, which is a mixed state,
we adopt the purification method~\footnote{%
Similarly, one can alternatively employ the pure-state method~\cite{Sugiura2012,Sugiura2013,Yoneta2024}.
While this approach requires additional computational cost for preparing infinite-temperature states,
it allows one to eliminate the ancilla register of $N$ qubits that is required for purification.
Even in this case, however, the number of queries to $U_H$ required to construct each individual state remains identical to that in the purification method.
}.
Then we introduce a quantum register $S$ consisting of $N$ qubits that describes the system and an ancilla register $A$ of the same size $N$.
Suppose that at the beginning of the computation the registers are initialized to
\begin{align}
    \left(\bigotimes_{n=1}^N \ket{0}_{n}\right) \otimes \left(\bigotimes_{n=1}^N \ket{0}_{\overline{n}}\right),
\end{align}
where $\ket{\sigma}_{n}$ and $\ket{\sigma}_{\overline{n}}$ ($\sigma=0, 1$) denotes the computational basis state of the $n$-th qubit of $S$ and $A$, respectively.

As the first step, we prepare a maximally entangled state between $S$ and $A$ given by
\begin{align}
  \ket{\Psi_0}
  = \frac{1}{2^{N/2}} \sum_{\substack{(\sigma_1,\sigma_2,\cdots,\sigma_N)\\\in {(0,1)}^N}} \left(\bigotimes_{n=1}^N \ket{\sigma_n}_{n}\right)
  \otimes \left(\bigotimes_{n=1}^N \ket{\sigma_n}_{\overline{n}}\right).
\end{align}
This state can be produced from the initial state via a single layer of CNOT gates since it can be expressed as
\begin{align}
  \ket{\Psi_0}
  = \bigotimes_{n=1}^N \frac{\ket{0}_{n}\ket{0}_{\overline{n}}+\ket{1}_{n}\ket{1}_{\overline{n}}}{\sqrt{2}}
  = \bigotimes_{n=1}^N \cnot{}_{n,\overline{n}} \ \hadamard_{n} \ket{0}_{n}\ket{0}_{\overline{n}}.
\end{align}
Here, $\cnot{}_{n, \overline{n}}$ denotes the CNOT gate with the $n$-th qubit of the system register being the control qubit and the $n$-th qubit of the ancilla register being the target qubit, and $\hadamard_{n}$ denotes the Hadamard gate acting on the $n$-th qubit of the system register.

It is worth noting that $\ket{\Psi_0}$ is a purification of the Gibbs state $\propto e^{- \beta H_N}$ at the inverse temperature $\beta=0$ (i.e., $T\to\infty$).
In the subsequent steps, we will ``cool'' $\ket{\Psi_0}$ and obtain a finite-temperature state.

\subsection*{Step~2: Construction of the Block-Encoding of \texorpdfstring{$\exp[-\frac{1}{2}N\eta(H_N/N)]$}{exp[- 1/2 N eta(H/N)]}} \label{sec:step2}
As the second step, we construct the block-encoding of $\exp[-\frac{1}{2}N\eta(H_N/N)]$ via the QSVT technique.

We introduce a degree-$d_\eta$ polynomial $\tilde{\eta}$ by normalizing $\eta$ such that $|\tilde{\eta}(x)| \leq 1$ for $x \in [-1,+1]$:
\begin{align}
    \tilde{\eta}(x) = \frac{2\eta(\alpha x)-(\eta_\mathrm{max}+\eta_\mathrm{min})}{\eta_\mathrm{max}-\eta_\mathrm{min}}.
\end{align}
Here,
\begin{align}
    \eta_\mathrm{max} = \max_{x\in[-1,+1]} \eta(\alpha x), \qquad
    \eta_\mathrm{min} = \min_{x\in[-1,+1]} \eta(\alpha x).
\end{align}
According to Lemma~\ref{lemma:exp-function},
for $\lambda = \frac{1}{4} N (\eta_\mathrm{max}-\eta_\mathrm{min})$
and any $\epsilon_{\mathrm{exp}} > 0$,
there exists a degree-$d_{\mathrm{exp}}$ polynomial $p_{\exp, \lambda, d_{\mathrm{exp}}}$ that satisfies
\begin{align}
    \max_{x \in [-1, +1]} \left| p_{\exp, \lambda, d_{\mathrm{exp}}}(x) - e^{- \lambda (x+1)} \right|
    \leq \epsilon_{\mathrm{exp}},
\end{align}
where
\begin{align}
    d_\mathrm{exp}
    = \left\lceil
        \sqrt{2 \left\lceil
            \max \left\{
                \frac{1}{4} e^2 N (\eta_\mathrm{max}-\eta_\mathrm{min}),
                \log\frac{2}{\epsilon_\mathrm{exp}}
            \right\}
        \right\rceil
        \log\frac{4}{\epsilon_\mathrm{exp}}}
    \right\rceil.
\end{align}
As shown in Appendix~\ref{sec:proof_p-exponential_bound}, it holds that
\begin{align}
    \max_{x \in [-1, +1]} |p_{\exp, \lambda, d_{\mathrm{exp}}}(x)| \leq 1.
    \label{eq:p-exponential_bound}
\end{align}
We then define a degree-$d_\eta d_\mathrm{exp}$ polynomial $P$ as
\begin{align}
    P(x) = \frac{1}{2} p_{\exp, \lambda, d_{\mathrm{exp}}}(\tilde{\eta}(x)),
\end{align}
which satisfies
\begin{align}
    \max_{x \in [-1, +1]} |P(x)| \leq 1/2, \qquad
    \max_{x \in [-1, +1]} \left| P(x) - \frac{1}{2} e^{- \frac{1}{2} N [\eta(\alpha x)-\eta_\mathrm{min}]} \right|
    \leq \epsilon_\mathrm{exp} / 2.
\end{align}
Therefore, by Theorem~\ref{theorem:EVT}, we can compose a quantum circuit $V$
that implements an $(1, a+2, 4 d_\mathrm{exp} d_\eta \sqrt{\epsilon_H/\alpha N} + \epsilon_\mathrm{exp} / 2)$-block-encoding of $\frac{1}{2} e^{- \frac{1}{2} N [\eta(H_N/N)-\eta_\mathrm{min}]}$,
which consists of $d_\eta d_\mathrm{exp}-1$ applications of $U_H$ and ${U_H}^\dagger$ and a single application of controlled $U_H$.

\subsection*{Step~3: Amplitude Amplification} \label{sec:step3}
As the third step, we apply $e^{- \frac{1}{2} N \eta(H_N/N)}$ to $\ket{\Psi_0}$, thereby obtaining the purification of $\rho_N^\eta$.

For the quantum circuit $V$ obtained in the previous step, we have
\begin{align}
    \left\|
        \frac{1}{2} \sqrt{\zeta}
        \ket{0}^{\otimes (a+2)} \otimes \ket{\rho_N^\eta}
        - \left({\ket{0}\bra{0}}^{\otimes (a+2)} \otimes \I[2N]\right)
        \left(V \otimes \I[N]\right)
        \left(\ket{0}^{\otimes (a+2)} \otimes \ket{\Psi_0}\right)
    \right\|
    \leq 4 d_\mathrm{exp} d_\eta \sqrt{\epsilon_H/\alpha N} + \epsilon_\mathrm{exp} / 2,
\end{align}
where $\zeta = e^{N \eta_\mathrm{min}} \Z_N^\eta / 2^N$,
and $\ket{\rho_N^\eta} = \sqrt{\rho_N^\eta} \otimes \I[N] \ket{\Psi_0}$ is a purification of $\rho_N^\eta$, which satisfies
\begin{align}
    \mathrm{Tr}_{A} [ \ket{\rho_N^\eta}\bra{\rho_N^\eta} ]
    = \rho_N^\eta.
\end{align}
In other words, when the quantum circuit $U = V \otimes \I[N]$ is applied
to $\ket{0}^{\otimes (a+2)} \otimes \ket{\Psi_0}$,
and the measurement operator $\Pi={\ket{0}\bra{0}}^{\otimes (a+2)} \otimes \I[2N]$ is applied to the output state, the resulting post-measurement state approximates the desired state $\ket{\rho_N^\eta}$.

However, its success probability $p$ is exponentially small in the system size $N$
\begin{align}
    p &= \left\|
        \left({\ket{0}\bra{0}}^{\otimes (a+2)} \otimes \I[2N]\right) \left(V \otimes \I[N]\right) \left(\ket{0}^{\otimes (a+2)} \otimes \ket{\Psi_0}\right)
    \right\|^2
    \simeq \zeta / 4
    = \exp[-\Theta(N)]
\end{align}
when
\begin{align}
  4 d_\mathrm{exp} d_\eta \sqrt{\epsilon_H/\alpha N} + \epsilon_\mathrm{exp} / 2 \ll \sqrt{\zeta} / 2.
\end{align}
This condition is equivalent to the error in the block-encoding $V$ being sufficiently smaller than the Frobenius norm of the embedded matrix $\frac{1}{2} e^{- \frac{1}{2} N [\eta(H_N/N)-\eta_\mathrm{min}]}$, which is necessarily satisfied when aiming for an accurate simulation.
Consequently, if one were to simply repeat the computation until the measurement succeeds, the expected number of repetitions required would be $1/p = \exp[\Theta(N)]$.

Fortunately, the amplitude amplification algorithm provides a quadratic speedup, significantly reducing the computational cost.
Noting that
\begin{align}
    \left\|
        \left({\ket{0}\bra{0}}^{\otimes (a+2)} \otimes \I[2N]\right) \left(V \otimes \I[N]\right) \left(\ket{0}^{\otimes (a+2)} \otimes \ket{\Psi_0}\right)
    \right\|
    \geq \sqrt{\zeta} /2 - \left( 4 d_\mathrm{exp} d_\eta \sqrt{\epsilon_H/\alpha N} + \epsilon_\mathrm{exp} / 2 \right),
\end{align}
and applying Theorem~\ref{theorem:FPAA}, we obtain the following result:
We can build a unitary quantum circuit $W$ such that
\begin{align}
    \left\|
        \ket{0} \otimes \frac{\left({\ket{0}\bra{0}}^{\otimes (a+2)} \otimes \I[2N]\right) \left(V \otimes \I[N]\right) \left(\ket{0}^{\otimes (a+2)} \otimes \ket{\Psi_0}\right)}{\left\|\left({\ket{0}\bra{0}}^{\otimes (a+2)} \otimes \I[2N]\right) \left(V \otimes \I[N]\right) \left(\ket{0}^{\otimes (a+2)} \otimes \ket{\Psi_0}\right)\right\|}
        - W \left(\ket{0}^{\otimes (a+3)} \otimes \ket{\Psi_0}\right)
    \right\|
    \leq \epsilon_\mathrm{AA},
\end{align}
which consists of $d_\mathrm{AA} = 2 \left\lceil \sqrt{t W \left( \frac{2^{16} k^2}{\pi t {\epsilon_\mathrm{AA}}^8} \right)} \right\rceil + 1$ applications of $V$ and $V^\dagger$.
Here,
\begin{align}
    k &= \frac{1}{\delta} \sqrt{\frac{1}{2} W\left( \frac{2^{11}}{\pi {\epsilon_\mathrm{AA}}^8} \right)}, \qquad
    t = \left\lceil \max \left\{ e^2 k^2 / 2, \log \frac{2^8 k}{\sqrt{\pi} {\epsilon_\mathrm{AA}}^4} \right\} \right\rceil, \qquad
    \delta = \sqrt{\zeta} / 2 - 2 \left( 4 d_\mathrm{exp} d_\eta \sqrt{\epsilon_H/\alpha N} + \epsilon_\mathrm{exp} / 2 \right).
\end{align}
Thus, with $d_\mathrm{AA}$ queries to $V$, one can prepare a state that approximate the purification of the generalized ensemble $\rho_N^\eta$:
\begin{align}
    &\left\|
    \ket{0}^{\otimes(a+3)} \otimes \ket{\rho_N^\eta}
    - W \left( \ket{0}^{\otimes(a+3)} \otimes \ket{\Psi_0} \right)
    \right\|
    \leq 2 \left( \frac{4 d_\mathrm{exp} d_\eta \sqrt{\epsilon_H/\alpha N} + \epsilon_\mathrm{exp} / 2}{\frac{1}{2} \sqrt{\zeta}} \right) + \epsilon_\mathrm{AA}.
\end{align}
By choosing
\begin{align}
    \epsilon_\mathrm{exp} = \epsilon \sqrt{\zeta} / 4,\qquad
    \epsilon_\mathrm{AA} = \epsilon / 2,
\end{align}
it is ensured that
\begin{align}
    \left\|
        \ket{0}^{\otimes(a+3)} \otimes \ket{\rho_N^\eta}
        - W \left( \ket{0}^{\otimes(a+3)} \otimes \ket{\Psi_0} \right)
    \right\|
    \leq \epsilon + 16 d_\mathrm{exp} d_\eta \sqrt{\epsilon_H/\alpha N \zeta}.
\end{align}

In this way, with $d_\eta d_\mathrm{exp} d_\mathrm{AA}$ queries to (controlled) $U_H$ and ${U_H}^\dagger$, one can prepare the purification of the generalized ensemble.
In particular, in the case of $\epsilon_H = 0$, we obtain Theorem~\ref{theorem:main}.

\section{Computational Cost in the Thermodynamic Limit} \label{sec:cost_TDL}
In this section, we integrate the results of the previous section with insights from statistical mechanics
to clarify how the computational cost of thermal state preparation depends on the choice of the ensemble.
Our analysis reveals that employing an appropriate ensemble leads to an improvement in the scaling of the computational cost compared to existing methods based on the canonical ensemble, as well as the generalized ensembles used in the previous study.
Furthermore, we discuss the origin of the advantage provided by utilizing appropriate generalized ensembles.

\subsection{Asymptotic Analysis} \label{sec:asymptotic-analysis}
We first perform an asymptotic analysis of the computational cost.
From the bound
\begin{align}
    \log x - \log \log x \leq W(x) \leq \log x - \frac{1}{2} \log \log x
\end{align}
for $x \geq e$~\cite{Hoorfar2008},
it follows that the number of queries to the block-encoding $U_H$ of the Hamiltonian required by the algorithm of Theorem~\ref{theorem:main} is
\begin{align}
    (\text{number of queries to $U_H$})
    &= \Order \left(
        d_\eta
        \sqrt{
            \frac{1}{\zeta}
            \left( N (\eta_\mathrm{max}-\eta_\mathrm{min}) + \log \frac{1}{\epsilon} + \log \frac{1}{\zeta} \right)
            \left( \log \frac{1}{\epsilon} + \log \frac{1}{\zeta} \right)
        }
        \log \frac{1}{\epsilon}
    \right). \label{eq:scaling}
\end{align}
This makes it clear that the computational cost for constructing the generalized ensemble is determined by three quantities:
the degree of the polynomial $\eta$, $d_\eta$;
the oscillation of $\eta$ over the interval $[-\alpha,+\alpha]$, $\eta_\mathrm{max} - \eta_\mathrm{min}$;
and the subnormalized partition function $\zeta = e^{N \eta_\mathrm{min}} \Z_N^\eta / 2^N$.
Since $\zeta$ depends on the thermodynamic properties of the system,
its ensemble dependence is not immediately apparent.
In the following, we incorporate results from statistical mechanics to elucidate the ensemble dependence of the computational cost.

From the entropy formula~\eqref{eq:entropy-formula}, $\zeta$ can be expressed in terms of the thermodynamic entropy density $\s$ and the energy density $u^\eta$ of the equilibrium state described by the generalized ensemble as
\begin{align}
    \zeta
    &= \frac{e^{N \eta_\mathrm{min}} \Z_N^\eta}{2^N}
    = \frac{e^{N \s(u^\eta)}}{2^N}
    \times e^{- N [\eta(u^\eta) - \eta_\mathrm{min}]}
    \times e^{o(N)}.
\end{align}
Thus, we obtain
\begin{align}
    (\text{number of queries to $U_H$})
    &= \underbrace{\sqrt{\frac{2^N}{e^{N \s(u^\eta)}}}}_{A_N^\eta}
    \times \underbrace{e^{\frac{1}{2} N [\eta(u^\eta) - \eta_\mathrm{min}]}}_{B_N^\eta}
    \times e^{o(N)}.
    \label{eq:scaling_N}
\end{align}
This shows that the leading factor of the computational cost grows exponentially with $N$.
We then discuss the ensemble dependence of the factors $A_N^\eta$ and $B_N^\eta$.

The first factor, $A_N^\eta$, depends only on the properties of the target equilibrium state.
This factor implies that equilibrium states with larger entropy are easier to prepare.
In particular, since the number of microstates with energy $u^\eta$ is given by $D \sim \exp[N\s(u^\eta)]$ according to the Boltzmann's entropy formula,
$A_N^\eta$ coincides with the complexity of searching for $D$ states within the target energy region among the total $2^N$ states in the whole Hilbert space using a quantum search algorithm~\cite{Grover1996,Grover2005,Hoyer2000,Long2001,Yoder2014}.
While classical search requires $\Order(2^N/\exp[N \s(u)])$ queries, quantum search enjoys a quadratic speedup achieved via amplitude amplification, reducing the cost to $\Order(\sqrt{2^N/\exp[N \s(u)]})$.
Moreover, it is important to note that classical methods are directly applicable only to diagonal Hamiltonians and, in the general case, require diagonalization of the Hamiltonian, which has an $\Order(2^N)$ space complexity and an even worse time complexity.

The second factor, $B_N^\eta$, is an artificial, exponentially large factor that depends on the choice of the ensemble (i.e., the function $\eta$).
This factor can be made small by choosing $\eta$ appropriately.
Thus, to optimize the scaling of the leading factor in the computational cost,
one should choose $\eta$ so as to minimize $\eta(u^\eta) - \eta_\mathrm{min}$, which appears in the exponent,
among those that yield the same equilibrium state.
As will be illustrated in Section~\ref{sec:optimal-ensemble}, this can be made arbitrarily small.

Therefore, by employing a suitably chosen ensemble, the total computational cost of our algorithm scales as
\begin{align}
    (\text{number of queries to $U_H$})
    = \sqrt{\frac{2^N}{e^{N \s(u^\eta)}}}
    \times e^{N \varepsilon + o(N)},
    \label{eq:optimal-scaling_N}
\end{align}
where $\varepsilon$ is an arbitrarily small positive constant independent of $N$.

\subsection{Canonical Ensemble}
For comparison, we examine the computational cost of preparing the canonical ensemble.
As explained in Section~\ref{sec:generalized-ensemble}, the generalized ensemble $\rho_N^\eta$ reduces to the canonical ensemble when $\eta = \beta u + \mathrm{const.}$.
Furthermore, the algorithm described in Section~\ref{sec:algorithm} reduces to the thermal state preparation algorithm based on the canonical ensemble proposed in Ref.~\cite{Gilyen2019}.
Hence, the computational cost is given by Theorem~\ref{theorem:main},
and its ensemble dependence is dominated by the factor
\begin{align}
    B_N^\eta = \exp \left[ \frac{1}{2} N \beta (u^\mathrm{can}(\beta) + 1) \right].
\end{align}
Since $\beta$ is the inverse temperature of the target equilibrium state and is uniquely determined for the equilibrium state, the exponent of $B_N^\eta$, $\frac{1}{2} N \beta [u^\mathrm{can}(\beta) + 1] (>0)$ is fixed, and there is no room to reduce it further.

\subsection{Optimal Ensemble} \label{sec:optimal-ensemble}
Using the results of Section~\ref{sec:asymptotic-analysis},
we search for ensembles that attain the optimal scaling of computational cost,
as given in Eq.~\eqref{eq:optimal-scaling_N},
in our thermal state preparation algorithm.

Recall that the ensemble dependence of the computational cost is dominated by
\begin{align}
    B_N^\eta = \exp \left[ \frac{1}{2} N \left( \eta(u^\eta) - \eta_\mathrm{min} \right) \right].
\end{align}
Thus, it suffices to search for the function $\eta$
that minimizes $\eta(u^\eta) - \eta_\mathrm{min}$
among those that describe the desired equilibrium state.

We begin by examining the qualitative behavior of $\eta(u^\eta) - \eta_\mathrm{min}$.
From Eq.~\eqref{eq:u-eta_argmax-s-minus-eta}, it follows that $u^\eta$ minimizes the function $\eta - \s$.
Consequently, if the contribution of the entropy term $\s$ is negligible, $u^\eta$ will be located near the minimum of $\eta$, leading to $\eta(u^\eta) - \eta_\mathrm{min} \simeq 0$.
However, the entropy term induces a deviation of $u^\eta$ from this minimum.
To mitigate this deviation, $\eta$ should be chosen such that it exhibits a steep increase away from its minimum.

Motivated by this consideration, we consider a family of polynomials for $\eta$ of the following form:
\begin{align}
    \eta(u) = \left( \frac{u-\mu}{\Delta} \right)^{2n}, \label{eq:eta_even}
\end{align}
where $\mu, \Delta \in \mathbb{R}$, and $n \in \mathbb{Z}_{> 0}$ are parameters that determine the equilibrium state described by the generalized ensemble~\footnote{In particular, in the limit $n\to\infty$, this ensemble approaches the microcanonical ensemble with an energy shell $[\mu-\Delta, \mu+\Delta]$.}.
This choice ensures that $\eta$ grows rapidly when $u$ deviates from $\mu$ by more than $\Delta$.
We show that in this family of generalized ensembles,
there exist ensembles that asymptotically achieve the optimal scaling given by Eq.~\eqref{eq:optimal-scaling_N} for any system at any temperature.

Suppose that the inverse temperature of the target equilibrium state is $\beta$.
By substituting Eq.~\eqref{eq:eta_even} into the temperature formula~\eqref{eq:temperature-formula}, we find that $\mu$ must satisfy
\begin{align}
    \mu = u^\eta - \Delta \left( \frac{\Delta\beta}{2n} \right)^{\frac{1}{2n-1}}, \label{eq:eta_even-mu}
\end{align}
with $\Delta$ and $n$ remaining as free parameters.
Conversely, if Eq.~\eqref{eq:eta_even-mu} holds, the generalized ensemble correctly describes an equilibrium state at the inverse temperature $\beta$.
Using this relation, we can evaluate $\eta(u^\eta) - \eta_\mathrm{min}$.
Since $\eta_\mathrm{min} = 0$, substituting Eq.~\eqref{eq:eta_even-mu} into Eq.~\eqref{eq:eta_even} yields
\begin{align}
    \eta(u^\eta) - \eta_\mathrm{min}
    = \left( \frac{\Delta\beta}{2n} \right)^{\frac{2n}{2n-1}}.
\end{align}
This expression shows that $\eta(u^\eta) - \eta_\mathrm{min}$ can be made arbitrarily small in the limit $n\to\infty$ or $\Delta\to0$.
Consequently, the ensemble-dependent part of the leading factor of the computational cost, $B_N^\eta$, can be reduced arbitrarily by choosing a sufficiently large $n$ or a sufficiently small $\Delta$.
It is important to emphasize that, in these limits,
while the exponent of $B_N^\eta$ can be made small,
the subleading factors become large, as is evident from Eq.~\eqref{eq:scaling}.
Therefore, for finite-size systems, these limits do not yield an optimal ensemble.
We will examine this in more detail in Section~\ref{sec:cost_finite}.

\subsection{Comparison with the Existing Method Based on Generalized Ensembles}
Ref.~\cite{Mizukami2023} employs the generalized ensemble associated with $\eta$, defined as
\begin{align}
    \eta(u) = - 2 \kappa \log (l - u), \label{eq:eta_poly}
\end{align}
where $l$ is a constant greater than or equal to the maximum eigenvalue of $H_N/N$,
and $\kappa$ is a positive constant~\cite{Gerling1993,Sugiura2012}.
This ensemble was originally introduced in the context of classical simulations,
and its use significantly simplifies algorithms based on thermal pure quantum states~\cite{Sugiura2012}.

Although this $\eta$ is not a polynomial, the identity
\begin{align}
    e^{- \frac{1}{2} N \eta(H_N/N)} = (l - H_N/N)^{N\kappa}
\end{align}
allows for a direct implementation of a block-encoding of $\sqrt{\rho_N^\eta}$ via QSVT when $\kappa = k/N$ with $k \in \mathbb{Z}_{\geq 0}$, without requiring a polynomial approximation of the exponential function, as our method does.
These values of $\kappa$ correspond to equilibrium states located at discrete points in the thermodynamic state space,
but these points become dense in the thermodynamic limit~\cite{Sugiura2012,Yoneta2019}.
Therefore, it is sufficient to perform computations only at these discrete values of $\kappa$ to obtain all the thermodynamic properties of the system.

However, as shown in Section~\ref{sec:asymptotic-analysis}, the overall computational cost is dominated by the amplitude-amplification step.
Consequently, this advantage does not affect the scaling of the total computational cost in the thermodynamic limit.
In a similar manner as in Sections~\ref{sec:algorithm} and \ref{sec:asymptotic-analysis},
we can show that the number of queries to the block-encoding $U_H$ of the Hamiltonian
required to construct the generalized ensemble associated with this $\eta$ scales as
\begin{align}
    (\text{number of queries to $U_H$})
    &= \sqrt{\frac{2^N}{e^{N \s(u^\eta)}}}
    \times \left( \frac{l + 1}{l - u^\eta} \right)^{N \kappa}
    \times e^{o(N)}.
\end{align}
For simplicity, we assume here that the maximum and minimum eigenvalues of $H_N/N$ are $+1$ and $-1$, respectively, and that we have access to an $(N, a, 0)$-block-encoding of the Hamiltonian.
From the temperature formula~\eqref{eq:temperature-formula},
we find that for the generalized ensemble to describe the equilibrium state at the inverse temperature $\beta$,
the parameter $\kappa$ must satisfy
\begin{align}
    \kappa = \frac{1}{2} \beta (l - u^\eta).
\end{align}
Substituting this yields
\begin{align}
    (\text{number of queries to $U_H$})
    &= \sqrt{\frac{2^N}{e^{N \s(u^\eta)}}}
    \times \left( \frac{l + 1}{l - u^\eta} \right)^{\frac{1}{2} N \beta (l - u^\eta)}
    \times e^{o(N)}.
\end{align}
This is minimized when $l = 1$, in which case the number of queries is governed only by quantities that are uniquely determined by the target equilibrium state:
\begin{align}
    (\text{number of queries to $U_H$})
    &= \sqrt{\frac{2^N}{e^{N \s(u^\eta)}}}
    \times \exp \left[ \frac{1}{2} N \beta (1 - u^\eta) \log \frac{2}{1 - u^\eta} + o(N) \right].
\end{align}
This implies that the total computational cost grows exponentially with an exponent larger than that of the optimal scaling~\eqref{eq:optimal-scaling_N} achieved by our method.

As seen above, ensembles that are advantageous for classical simulations are not necessarily suitable for quantum simulations.
Therefore, it is important to reexamine and identify ensembles tailored to the quantum computation.

\subsection{Origin of the Advantage Provided by Generalized Ensembles}
At the end of this section,
we explain the underlying mechanism behind the speedup
achieved by our algorithm through the use of generalized ensembles.
This reveals that the efficiency gained by employing generalized ensembles
(or, equivalently, the overhead incurred when using the canonical ensemble)
stems from constraints intrinsic to quantum computations.
To this end, we first revisit our thermal state preparation algorithm described in Section~\ref{sec:algorithm}
and then discuss how the choice of ensemble influences the computational cost. 

The maximally entangled state between the system and ancilla registers prepared in Step~\hyperref[sec:step1]{1} can be expressed in terms of the energy eigenstates $\ket{E_\nu}$ of $H_N$ as
\begin{align}
    \ket{\Psi_0}
    \propto \sum_{\bm{\sigma}} \ket{\bm{\sigma}} \otimes \ket{\bm{\sigma}}
    = \sum_{\nu=1}^{2^N} \ket{E_\nu} \otimes \ket{j_\nu},
\end{align}
where $\ket{\bm{\sigma}} = \bigotimes_{n=1}^N \ket{\sigma_n}_n \ (\sigma_n=\pm 1)$ are spin product states,
and $\ket{j_\nu} = \sum_{\bm{\sigma}} \braket{E_\nu|\bm{\sigma}} \ket{\bm{\sigma}}$ form an orthonormal basis of the ancilla register.
Thus, $\ket{\Psi_0}$ constitutes an equal-weight superposition of all energy eigenstates, each tensored with a mutually orthogonal ancilla state.

Noting that the number of energy eigenstates with the energy density $u$ is given by $\sim e^{N \s(u)}$ according to Boltzmann's entropy formula
and that the gradient of the entropy, $\s'(u)$, coincides with the inverse temperature $\beta$,
we find that the number of energy eigenstates increases exponentially as the temperature increases.
As a result, $\ket{\Psi_0}$ places the majority of its weight on eigenstates lying in the infinite-temperature regime,
and thus represents an infinite-temperature state (purple curve in Fig.~\ref{fig:distribution}).
Therefore, the goal of thermal state preparation is to extract and amplify the contributions of eigenstates within the target energy region from $\ket{\Phi_0}$ so that they become dominant in the resulting state.
In our algorithm, this is accomplished by applying the operator $e^{- \frac{1}{2} N \eta(H_N/N)}$,
which is embedded in the quantum circuit $V$ in Step~\hyperref[sec:step2]{2} (green curve in Fig.~\ref{fig:distribution}),
followed by amplitude amplification in Step~\hyperref[sec:step3]{3} (light blue curve in Fig.~\ref{fig:distribution}).

When the canonical ensemble is employed (i.e., for $\eta=\beta u$),
applying $e^{- \frac{1}{2} \beta H_N}$ amounts to multiplying the amplitudes of the wave function
by an exponentially decaying filter, $e^{-\frac{1}{2} N \beta u}$,
thereby suppressing high-temperature components and emphasizing finite-temperature contributions.
However, as explained in Section~\ref{sec:block-encoding},
quantum computers require all operations to be implemented as unitary operators.
This constraints limits us to embedding operators with a spectral norm not exceeding unity.
Consequently, it is necessary to rescale $e^{- \frac{1}{2} \beta H_N}$
so that all eigenvalues are at most $1$.
Since the maximum eigenvalue of $e^{- \frac{1}{2} \beta H_N}$ is equal to the Boltzmann weight of the ground state, $e^{- \frac{1}{2} N \beta u_0}$,
it must be rescaled so that this value is less than or equal to $1$.
As a result, the filter function applied to the energy eigenstates with eigenvalues $N u$ is at most $e^{- \frac{1}{2} N \beta (u - u_0)}$.
Since $u - u_0 = \Theta(N^0)$ for $u$ in the finite‐temperature energy region,
the expansion coefficients with respect to the energy eigenstates
are multiplied by an exponentially small factor
within the target energy region, as illustrated in Fig.~\ref{fig:filter}.
This causes the computational cost for amplitude amplification in Step~\hyperref[sec:step3]{3} to become exponentially larger than the intuitive estimate of $\sqrt{2^N / e^{N \s(u)}}$ (see Section~\ref{sec:asymptotic-analysis}).

Ideally, the filter function should be designed to be a maximum value of $1$ within the target energy region.
By using generalized ensembles, the filter function can be chosen with considerable flexibility.
This flexibility allows us to construct a filter function
whose maximum does not exceed $1$, as required for embedding into a unitary operator,
while ensuring that it approaches this maximum arbitrarily closely
within the target energy region (see Fig.~\ref{fig:filter}).
This significantly reduces the computational cost associated with the amplitude amplification process.

In summary, the poor compatibility between the canonical ensemble and the block-encoding technique arises from an intrinsic constraint of quantum computation and incurs an exponential overhead in thermal-state preparation.
In contrast, the flexibility in choosing generalized ensembles
allows us to mitigate this overhead, significantly reducing total computational cost.

\section{Computational Cost for Finite-Size Systems} \label{sec:cost_finite}
In the previous section, using the generalized ensemble associated with $\eta$ defined by Eq.~\eqref{eq:eta_even},
we have shown that taking the limit $n \to \infty$ or $\delta \to 0$ allows one 
to optimize the scaling in the leading factor of the computational cost.
However, as mentioned in Section~\ref{sec:optimal-ensemble},
such ensemble is not optimal for finite-size systems.
By using the expression for the computational cost provided in Theorem~\ref{theorem:main},
we can optimize the computational cost even for finite systems.
In this section, by precisely evaluating the computational cost for specific models,
we demonstrate that even for relatively small finite-size systems,
which are likely to be the first applications of this algorithm in the future, 
one can significantly reduce the computational cost using our method.

\begin{figure*}[t]
    \begin{minipage}{0.48\linewidth}
        \centering
        \includegraphics[height=6cm]{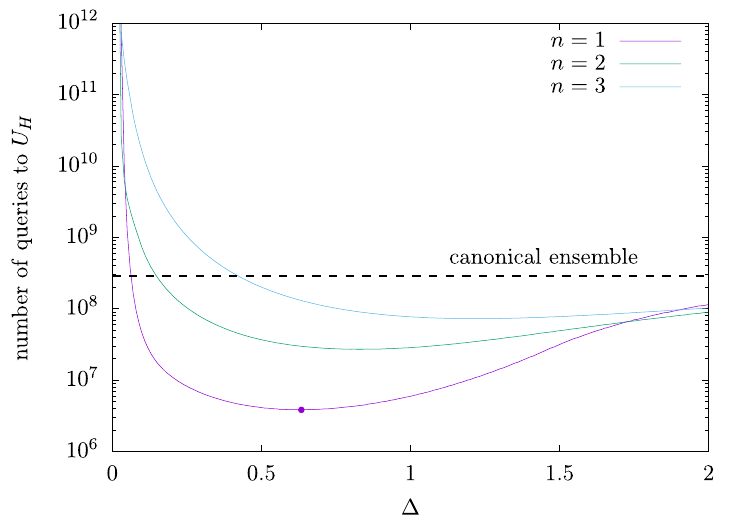}
        \subcaption{}
        \label{fig:optimization_L50}
    \end{minipage}
    \hfill
    \begin{minipage}{0.48\linewidth}
        \centering
        \includegraphics[height=6cm]{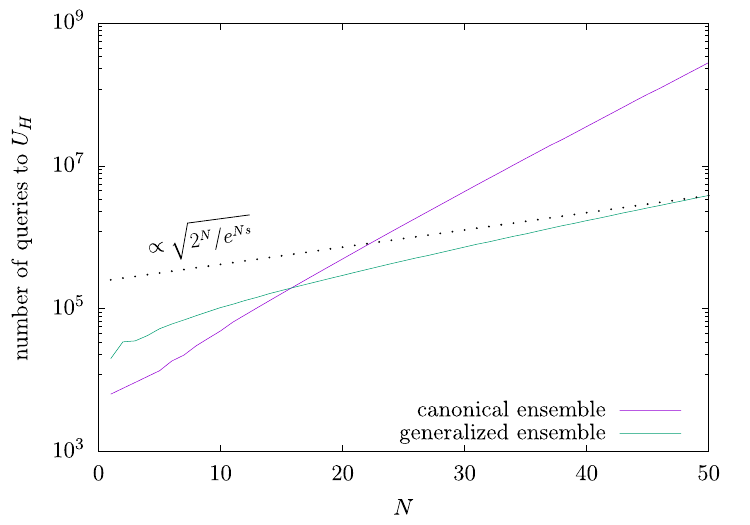}
        \subcaption{}
        \label{fig:optimal_L50}
    \end{minipage}
    \caption{%
        (\subref{fig:optimization_L50}) Number of queries to the block-encoding of the Hamiltonian, $U_H$, required to construct the generalized ensemble associated with $\eta$ defined by Eq.~\eqref{eq:eta_even} for $N=50$, shown as a function of the ensemble parameters.
        The ensembles are compared under the constraint that they describe the same equilibrium state, with the inverse temperature fixed at $\beta = 0.5$.
        (\subref{fig:optimal_L50}) $N$-dependence of the number of queries required to construct the canonical ensemble and the generalized ensemble with optimized ensemble parameters for $\beta=0.5$.
        The dashed line indicates the optimal scaling $\sqrt{2^N / e^{N\s(u^\eta)}}$ predicted by the asymptotic analysis.
    }
\end{figure*}

As an illustration, we consider the example of free spins, whose Hamiltonian is given by
\begin{align}
    H_N = \sum_{n=1}^{N} Z_{n}.
    \label{eq:Hamiltonian_free-spin}
\end{align}
For this system, statistical-mechanical quantities can be computed exactly
not only for the canonical ensemble
but also for the generalized ensemble (see Appendix~\ref{sec:free-spins}).
This allows for an explicit evaluation of the computational cost for large systems
using the expression provided in Theorem~\ref{theorem:main}.
Although free spins are chosen for the sake of analytical tractability,
we emphasize that, as indicated by Theorem~\ref{theorem:main},
the number of queries to the block-encoding of the Hamiltonian depends on the system
solely through its partition function
and is independent of microscopic properties such as integrability.
Hence, the simplicity of the model does not undermine our analysis.
In what follows, we assume that the Hamiltonian is provided through a $(\|H_N\|,a,0)$-block-encoding $U_H$.

As a concrete scenario, consider a quantum computer capable of simulating systems of up to $50$ sites is used.
In this case, the computation proceeds as follows:
First, we determine the ensemble parameters that minimize the computational cost for the most computationally expensive $50$-site system.
Next, we fix these ensemble parameters
and perform simulations for various system sizes up to $50$ sites.
Finally, we extrapolate the results to the thermodynamic limit.

In Fig.~\ref{fig:optimization_L50}, we plot the number of queries to $U_H$ required to construct the generalized ensemble associated with $\eta$ defined by Eq.~\eqref{eq:eta_even} for $N=50$ as a function of the ensemble parameters.
To ensure a fair comparison, we compare ensembles that describe the same equilibrium state.
Since this model does not exhibit a first-order phase transition and its equilibrium state is therefore uniquely determined by the inverse temperature, it suffices to compare ensembles at the same inverse temperature.
Then we impose the constraint~\eqref{eq:eta_even-mu} on the ensemble parameters so that the inverse temperature is fixed at $\beta=0.5$.
We find that the computational cost is minimized at $n = 1$ and $\Delta \simeq 0.63$ and is reduced by two orders of magnitude compared to the case using the canonical ensemble.
It can also be seen that, even when using generalized ensembles, an inappropriate choice of parameters can result in a higher computational cost than that of the canonical ensemble.
This highlights the importance of not only using generalized ensembles, but also carefully selecting one that is tailored to the problem.

Figure~\ref{fig:optimal_L50} shows the $N$-dependence of the number of queries to $U_H$ required to construct the canonical ensemble and the generalized ensembles with the optimized parameters ($n = 1, \Delta \simeq 0.63$) identified above.
We can confirm that the cost for preparing the generalized ensemble is exponentially smaller than that for the canonical ensemble.

To clearly highlight the difference in asymptotic behavior,
we plot the $N$-dependence of the number of queries to $U_H$ for the ensemble optimized at $N=1000$
in Fig.~\ref{fig:optimal_L1000}.
The ensemble parameters are chosen to minimize the number of queries for the system with $N=1000$ under the constraint $\beta=0.5$.
We observe that the optimal scaling in the thermodynamic limit, $\sqrt{2^N / e^{N\s(u^\eta)}}$, predicted by our asymptotic analysis, is indeed nearly achieved.
This confirms the validity of our analysis in Section~\ref{sec:cost_TDL}.

\begin{figure}[t]
    \centering
    \includegraphics[height=6cm]{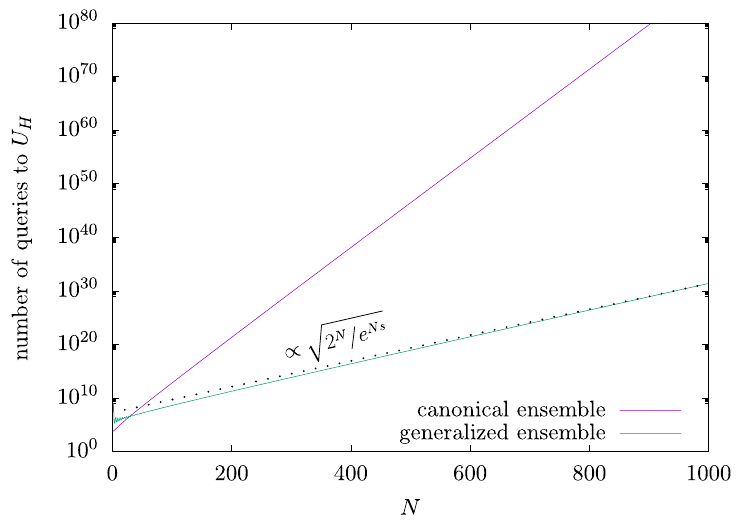}
    \caption{%
        $N$-dependence of the number of queries to the block-encoding of the Hamiltonian, $U_H$, required to construct the canonical ensemble and the generalized ensemble associated with $\eta$ defined by Eq.~\eqref{eq:eta_even} for $\beta=0.5$.
        In the generalized ensemble, the ensemble parameters are optimized to minimize the number of queries for the system with $N=1000$.
        The dashed line indicates the optimal scaling $\sqrt{2^N / e^{N\s(u^\eta)}}$ predicted by the asymptotic analysis.
    }
    \label{fig:optimal_L1000}
\end{figure}

In summary, by employing appropriately chosen generalized ensembles, the scaling of the computational cost of thermal state preparation can be improved, yielding significant reductions even for moderately sized systems.

\section{Conclusion} \label{sec:conclusion}
In this work, we have proposed a quantum algorithm for thermal state preparation
based on generalized statistical ensembles
within the framework of quantum singular value transformation.
Through a detailed analysis of the computational cost,
we have shown that by appropriately choosing the ensemble,
one can significantly reduce the computational cost
compared to existing methods based on the canonical ensemble,
not only in terms of asymptotic scaling in the thermodynamic limit,
but also for moderately sized finite systems
that are likely to be the first targets of applications.
We have further shown that the origin of this advantage stems from fundamental constraints of quantum computation,
which inherently limit the efficiency of the algorithm based on the canonical ensemble. 
Since our algorithm is applicable to arbitrary thermodynamic systems at any temperature,
it provides a general and efficient method for finite-temperature quantum simulation.
More broadly, our results highlight the potential of ensemble design
as a powerful tool for enhancing the efficiency of a broad range of quantum algorithms.

So far, we have studied the advantages of employing generalized ensembles
from the perspective of reducing the computational cost.
However, the benefits of generalized ensembles are not limited to computational efficiency,
particularly when investigating systems that exhibit first-order phase transitions.
Let us briefly explain this.

As mentioned in Section~\ref{sec:ensemble},
different ensembles are thermodynamically equivalent,
provided they are appropriately chosen.
However, the canonical ensemble is inappropriate in the first-order phase-transition region~\cite{Gross2001,Yoneta2019},
where multiple phases coexist in various proportions.
To illustrate this point, let us consider a system which undergoes a temperature-driven first-order phase transition.
In a phase coexistence state, the temperature takes the same value between the coexisting phases and therefore cannot distinguish between them.
Consequently, the canonical ensemble, which is specified by the temperature, can represent only a particular state among the various states in the phase transition region.
In particular, under periodic boundary conditions, it yields either a single-phase state or a statistical mixture of single-phase states, and fails to give a phase coexistence state~\cite{Yoneta2019}.

In contrast, since coexisting phases can be distinguished by their energy, by employing an ensemble in which the energy has a macroscopically definite value, one can obtain each of the equilibrium states within the first-order phase-transition region.
Since generalized ensembles ensure that the energy takes a macroscopically definite value, we can investigate microscopic structures of these states, including phase coexistence states.
Since our thermal state preparation algorithm is based on generalized ensembles,
it is directly applicable to systems exhibiting first-order phase transitions.

\begin{acknowledgments}
We thank K.~Fujii, T.~Ikeda, K.~Mizuta, and A.~Shimizu for discussions.
This work was supported by RIKEN Special Postdoctoral Researcher Program.
\end{acknowledgments}

\onecolumngrid

\appendix
\section{Quantum Circuit for Eigenvalue Transformation} \label{sec:EVT}
In this appendix, we provide an overview of the construction presented in Ref.~\cite{Gilyen2019}
for the quantum circuit that implements a block-encoding of the eigenvalue transformation of a Hermitian matrix,
as stated in Theorem~\ref{theorem:EVT}.

Define two real polynomials of definite parity,
\begin{align}
    P^{(0)}_R(x) = P(x) + P(-x), \quad
    P^{(1)}_R(x) = P(x) - P(-x).
\end{align}
From the condition~\eqref{eq:EVT_condition} of the theorem,
for each $c=0,1$, the polynomial $P_R^{(c)}(x) = P(x) + (-1)^c P(-x)$ satisfies conditions~\ref{cond:QSP_real_1} and \ref{cond:QSP_real_2} of Theorem~\ref{theorem:QSP_real}.
Therefore, there exist polynomials $P^{(c)} \in \mathbb{C}[x]$
such that $P_R^{(c)}(x) = \mathrm{Re}[P^{(c)}(x)]$
and satisfy conditions~\ref{cond:QSP_1}-\ref{cond:QSP_4} of Theorem~\ref{theorem:QSP}.
Then let $\Phi^{(c)}$ denote the QSP phases corresponding to $P^{(c)}$.

It is straightforward to verify that QSP for $-\Phi^{(c)}$ generates ${P^{(c)}}^*$.
Therefore, a linear combination of the alternating phase modulation sequences (see Theorem~\ref{theorem:QSVT} for the definition),
\begin{align}
    \frac{1}{2} \sum_{b=0,1} U_{(-1)^b \Phi^{(c)}},
\end{align}
encodes $P_R^{(c)}(A/\alpha)$ for $c = 0, 1$.
Consequently,
\begin{align}
    \frac{1}{4} \sum_{b,c=0,1} U_{(-1)^b \Phi^{(c)}}
\end{align}
encodes $P(A/\alpha)$.

To implement this as a quantum circuit efficiently, we utilize the identity~\eqref{eq:exp-i-phi-Pi} and the linear combination of unitaries technique (see Fig.\ref{fig:exp-i-phi-Pi_2}).
As a result, the quantum circuit for a block-encoding of the eigenvalue transformation $P(A/\alpha)$ can be implemented as shown in Fig.~\ref{fig:EVT}.
The resulting circuit uses
$d-1$ queries to $U$ and $U^\dagger$,
a single query to either controlled-$U$ or controlled-$U^\dagger$,
and $\Order((a+1)d)$ additional one- and two-qubit gates.

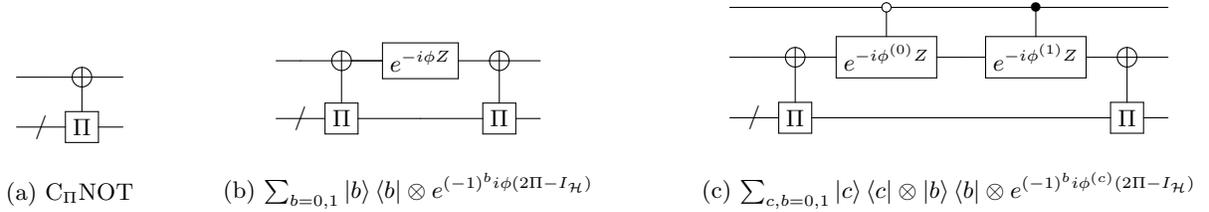
\begin{figure*}[h]
    \begin{minipage}[b]{0.19\linewidth}
        \centering
        \begin{displaymath}
            \Qcircuit @C=1.em @R=1.em {
                & \qw &\targ{1} & \qw\\
                & /\qw & \gate{\Pi}\qwx[-1] & \qw
            }
        \end{displaymath}
        \subcaption{$\cnot{\Pi}$}
        \label{fig:CPiNOT}
    \end{minipage}
    \begin{minipage}[b]{0.3\linewidth}
        \centering
        \begin{displaymath}
            \Qcircuit @C=1.em @R=1.em {
                & \qw & \targ{1} & \gate{e^{- i \phi Z}} \qw & \targ{1} & \qw\\
                & /\qw & \gate{\Pi}\qwx & \qw & \gate{\Pi}\qwx & \qw
            }
        \end{displaymath}
        \subcaption{$\sum_{b=0,1} \ket{b}\bra{b} \otimes e^{(-1)^b i \phi (2 \Pi - \I_\mathcal{H})}$}
        \label{fig:exp-i-phi-Pi_1}
    \end{minipage}
    \begin{minipage}[b]{0.49\linewidth}
        \centering
        \begin{displaymath}
            \Qcircuit @C=1.em @R=1.em {
                & \qw & \qw & \ctrlo{1} & \qw & \ctrl{1} & \qw & \qw\\
                & \qw & \targ{1} & \gate{e^{- i \phi^{(0)} Z}} & \qw & \gate{e^{- i \phi^{(1)} Z}} & \targ{1} & \qw\\
                & /\qw &\gate{\Pi}\qwx & \qw & \qw & \qw & \gate{\Pi}\qwx & \qw
            }
        \end{displaymath}
        \subcaption{$\sum_{c,b=0,1} \ket{c}\bra{c} \otimes \ket{b}\bra{b} \otimes e^{(-1)^b i \phi^{(c)} (2 \Pi - \I_\mathcal{H})}$}
        \label{fig:exp-i-phi-Pi_2}
    \end{minipage}
    \caption{%
        Gate sequences used for quantum singular value transformation and eigenvalue transformation.
    }
\end{figure*}

\begin{turnpage}
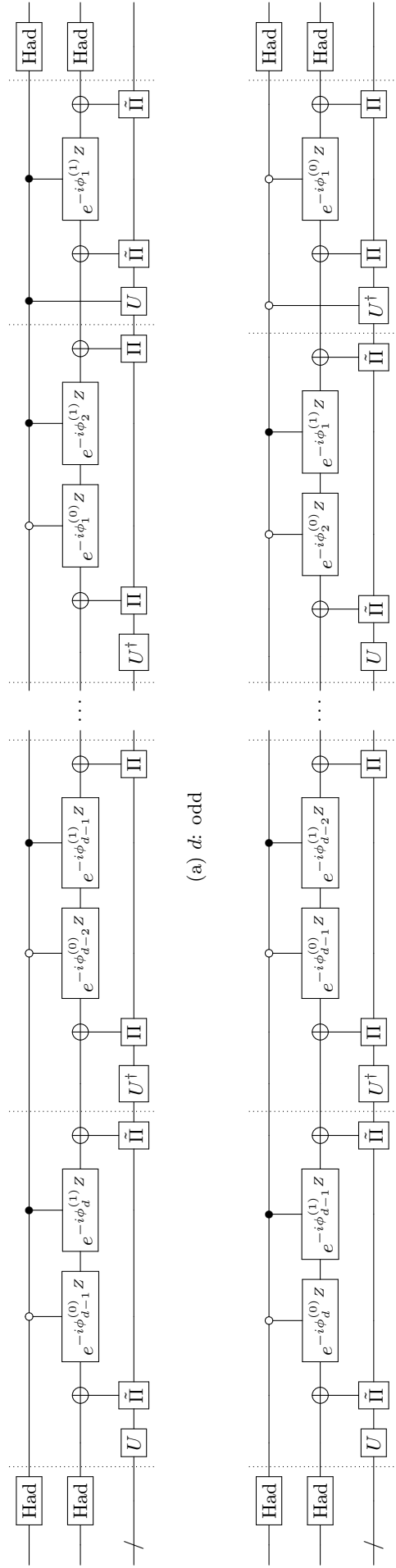
\begin{figure}
    {
        \centering
        \begin{align*}
            \Qcircuit @C=1.em @R=1.em {
                & \qw & \gate{\hadamard} \ar@{.}[]+<1.7em,1em>;[dd]+<1.7em,-1em>
                & \qw & \qw & \ctrlo{+1} & \ctrl{+1} & \qw \ar@{.}[]+<1.2em,1em>;[dd]+<1.2em,-1em>
                & \qw & \qw & \ctrlo{+1} & \ctrl{+1} & \qw \ar@{.}[]+<1.2em,1em>;[dd]+<1.2em,-1em>
                & \qw & & \ar@{.}[]+<0.4em,1em>;[dd]+<0.4em,-1em>
                & \qw & \qw & \ctrlo{+1} & \ctrl{+1} & \qw \ar@{.}[]+<1.2em,1em>;[dd]+<1.2em,-1em>
                & \ctrl{+2} & \qw & \ctrl{+1} & \qw \ar@{.}[]+<1.2em,1em>;[dd]+<1.2em,-1em>
                & \gate{\hadamard} & \qw\\
                & \qw & \gate{\hadamard}
                & \qw & \targ & \gate{e^{- i \phi_{d-1}^{(0)} Z}} & \gate{e^{- i \phi_{d}^{(1)} Z}} & \targ
                & \qw & \targ & \gate{e^{- i \phi_{d-2}^{(0)} Z}} & \gate{e^{- i \phi_{d-1}^{(1)} Z}} & \targ
                & \qw & \cdots &
                & \qw & \targ & \gate{e^{- i \phi_{1}^{(0)} Z}} & \gate{e^{- i \phi_{2}^{(1)} Z}} & \targ
                & \qw & \targ & \gate{e^{- i \phi_{1}^{(1)} Z}} & \targ
                & \gate{\hadamard} & \qw\\
                & /\qw & \qw
                & \gate{U} & \gate{\tilde{\Pi}}\qwx{-1} & \qw & \qw & \gate{\tilde{\Pi}}\qwx{-1}
                & \gate{U^\dagger} & \gate{\Pi}\qwx{-1} & \qw & \qw & \gate{\Pi}\qwx{-1}
                & \qw & &
                & \gate{U^\dagger} & \gate{\Pi}\qwx{-1} & \qw & \qw & \gate{\Pi}\qwx{-1}
                & \gate{U} & \gate{\tilde{\Pi}}\qwx{-1} & \qw & \gate{\tilde{\Pi}}\qwx{-1}
                & \qw & \qw
            }
        \end{align*}
        \subcaption{$d$: odd}
        \label{fig:EVT_odd}
    }
    {
        \centering
        \begin{align*}
            \Qcircuit @C=1.em @R=1.em {
                & \qw & \gate{\hadamard} \ar@{.}[]+<1.7em,1em>;[dd]+<1.7em,-1em>
                & \qw & \qw & \ctrlo{+1} & \ctrl{+1} & \qw \ar@{.}[]+<1.2em,1em>;[dd]+<1.2em,-1em>
                & \qw & \qw & \ctrlo{+1} & \ctrl{+1} & \qw \ar@{.}[]+<1.2em,1em>;[dd]+<1.2em,-1em>
                & \qw & & \ar@{.}[]+<0.4em,1em>;[dd]+<0.4em,-1em>
                & \qw & \qw & \ctrlo{+1} & \ctrl{+1} & \qw \ar@{.}[]+<1.2em,1em>;[dd]+<1.2em,-1em>
                & \ctrlo{+2} & \qw & \ctrlo{+1} & \qw \ar@{.}[]+<1.2em,1em>;[dd]+<1.2em,-1em>
                & \gate{\hadamard} & \qw\\
                & \qw & \gate{\hadamard}
                & \qw & \targ & \gate{e^{- i \phi_{d}^{(0)} Z}} & \gate{e^{- i \phi_{d-1}^{(1)} Z}} & \targ
                & \qw & \targ & \gate{e^{- i \phi_{d-1}^{(0)} Z}} & \gate{e^{- i \phi_{d-2}^{(1)} Z}} & \targ
                & \qw & \cdots &
                & \qw & \targ & \gate{e^{- i \phi_{2}^{(0)} Z}} & \gate{e^{- i \phi_{1}^{(1)} Z}} & \targ
                & \qw & \targ & \gate{e^{- i \phi_{1}^{(0)} Z}} & \targ
                & \gate{\hadamard} & \qw\\
                & /\qw & \qw
                & \gate{U} & \gate{\tilde{\Pi}}\qwx{-1} & \qw & \qw & \gate{\tilde{\Pi}}\qwx{-1}
                & \gate{U^\dagger} & \gate{\Pi}\qwx{-1} & \qw & \qw & \gate{\Pi}\qwx{-1}
                & \qw & &
                & \gate{U} & \gate{\tilde{\Pi}}\qwx{-1} & \qw & \qw & \gate{\tilde{\Pi}}\qwx{-1}
                & \gate{U^\dagger} & \gate{\Pi}\qwx{-1} & \qw & \gate{\Pi}\qwx{-1}
                & \qw & \qw
            }
        \end{align*}
        \subcaption{$d$: even}
        \label{fig:EVT_even}
    }
    \caption{%
        Quantum circuits implementing eigenvalue transformation
        for (\subref{fig:EVT_odd}) odd-degree
        and (\subref{fig:EVT_even}) even-degree polynomials.
    }
    \label{fig:EVT}
\end{figure}
\end{turnpage}

\section{Proof of Theorem~\ref{theorem:FPAA}} \label{sec:proof_FPAA}
For the sake of providing a precise evaluation of the computational cost, we modify the proof of theorems from Ref.~\cite{Gilyen2019}.
In the proof, the following lemma plays a crucial role:
\begin{lemma}[Polynomial approximation of the error function $\erf(k x)$~\cite{Low2017_1}] \label{lemma:error-function}
For all $k > 0$, the polynomial $p_{\erf, k, n}(x)$ of degree $n$
\begin{align}
    p_{\erf, k, n}(x)
    = \frac{2 k e^{- k^2 / 2}}{\sqrt{\pi}} \left(
        I_0(k^2 / 2) x
        + \sum_{j=1}^{(n-1)/2} I_j(k^2 / 2) (-1)^j \left( \frac{T_{2j+1}(x)}{2j+1} - \frac{T_{2j-1}(x)}{2j-1} \right)
    \right)
\end{align}
satisfies
\begin{align}
    \max_{x \in [-1, +1]}
    \left|
        p_{\erf, k, n}(x) - \erf(k x)
    \right|
    \leq \frac{4 k}{\sqrt{\pi} n} \left(
        2 e^{- (n-1)^2 / 8 t} + e^{- k^2 / 2 - t}
    \right)
\end{align}
for every integer $t \geq e^2 k^2 / 2$.
Here, $I_j(x)$ are modified Bessel functions of the first kind, and $T_j(x)$ are Chebyshev polynomials of the first kind.
\end{lemma}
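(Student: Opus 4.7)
My plan is to establish the lemma by deriving an exact Chebyshev series for $\erf(kx)$, identifying $p_{\erf,k,n}$ as a truncation of that series, and then bounding the tail via non-asymptotic inequalities for modified Bessel functions that are split at the parameter $t$. The approach parallels the proof of Lemma~\ref{lemma:exp-function}, but with an additional term-by-term integration step that is the source of the extra $1/n$ factor in the error bound.

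First I would expand $e^{-k^2 x^2}$ in the Chebyshev basis. Writing $\cos^2\theta = (1+\cos 2\theta)/2$ under the substitution $x=\cos\theta$, the Jacobi--Anger identity $e^{z\cos\phi}=I_0(z)+2\sum_{j\geq 1}I_j(z)\cos(j\phi)$ together with $I_j(-z)=(-1)^j I_j(z)$ gives
\begin{align}
    e^{-k^2 x^2}
    = e^{-k^2/2}\left[\, I_0(k^2/2) + 2\sum_{j=1}^{\infty}(-1)^j I_j(k^2/2)\,T_{2j}(x)\,\right].
\end{align}
Then I would apply $\frac{d}{dx}\erf(kx)=\frac{2k}{\sqrt{\pi}}e^{-k^2 x^2}$ and integrate term by term from $0$ to $x$, using $\int_0^x T_0\,dt = x$ and $\int_0^x T_{2j}(t)\,dt = \frac{1}{2}\bigl(\frac{T_{2j+1}(x)}{2j+1}-\frac{T_{2j-1}(x)}{2j-1}\bigr)$ for $j\geq 1$. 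Since $T_{2j\pm 1}(0)=0$, the boundary contributions vanish, and the truncation of the resulting infinite series at degree $n$ (equivalently, at $j=(n-1)/2$) is exactly $p_{\erf,k,n}(x)$.

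Next I would control the tail. Using $|T_j(x)|\leq 1$ on $[-1,+1]$ and $\frac{1}{2j-1}+\frac{1}{2j+1}\leq \frac{2}{n}$ for $j\geq (n+1)/2$ to factor out $1/n$, the error reduces to
\begin{align}
    \left|p_{\erf,k,n}(x)-\erf(kx)\right|
    \leq \frac{4k}{\sqrt{\pi}\,n}\sum_{j=(n+1)/2}^{\infty} e^{-k^2/2} I_j(k^2/2).
\end{align}
The task therefore becomes bounding the modified-Bessel tail $e^{-\lambda}\sum_{j\geq m}I_j(\lambda)$ with $\lambda=k^2/2$ and $m=(n+1)/2$. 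I would split this sum at $t$: for $m\leq j\leq t$, the Taylor-series inequality $I_j(\lambda)\leq (\lambda/2)^j/j!$ together with Stirling's bound $j!\geq (j/e)^j$ gives $e^{-\lambda}I_j(\lambda)\leq (e\lambda/(2j))^j$, and the hypothesis $t\geq e^2\lambda$ upgrades this to a Gaussian envelope of the form $e^{-j^2/(2t)}$; summing from $m=(n+1)/2$ contributes the $2\,e^{-(n-1)^2/8t}$ term, where the factor of $8$ (rather than the $2$ appearing in Lemma~\ref{lemma:exp-function}) arises because the effective truncation index here is halved. For $j>t$, I would use the complementary monotonicity/decay inequality for $I_j(\lambda)$ past its turning point to produce the $e^{-\lambda-t}=e^{-k^2/2-t}$ term.

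The main obstacle will be making the two Bessel-tail estimates quantitative enough that, after summation, the constants line up exactly with the stated bound rather than merely reproducing its order of magnitude. In particular, bounding the geometric/Gaussian sum in the first regime so that it contributes the clean prefactor $2$ (instead of a larger constant), and verifying that the crossover condition $t\geq e^2 k^2/2$ is precisely what is needed for the elementary inequality $(e\lambda/(2j))^j\leq e^{-j^2/(2t)}$ to hold uniformly on $m\leq j\leq t$, is the delicate calculus-of-inequalities step; the rest of the argument is a direct adaptation of the strategy already used for $p_{\exp,\lambda,n}$.
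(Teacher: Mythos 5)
The paper does not prove this lemma --- it is imported from Ref.~\cite{Low2017_1} --- so there is no in-paper argument to compare against; I assess your proposal on its own terms. Steps 1--3 of your plan are correct and are exactly the standard route: the Jacobi--Anger expansion of $e^{-k^2x^2}$ in the even Chebyshev basis after writing $\cos^2\theta=(1+\cos2\theta)/2$ and using $I_j(-z)=(-1)^jI_j(z)$; term-by-term integration via $\int_0^x T_{2j}(u)\,du=\frac{1}{2}\left(\frac{T_{2j+1}(x)}{2j+1}-\frac{T_{2j-1}(x)}{2j-1}\right)$, the boundary terms vanishing since $T_{2j\pm1}(0)=0$; and the reduction of the truncation error, via $|T_j|\leq1$ and $\frac{1}{2j-1}+\frac{1}{2j+1}\leq\frac{2}{n}$ for $j\geq(n+1)/2$, to the Bessel tail $\frac{4k}{\sqrt{\pi}n}\,e^{-\lambda}\sum_{j\geq(n+1)/2}I_j(\lambda)$ with $\lambda=k^2/2$. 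One small slip: $I_j(\lambda)\leq(\lambda/2)^j/j!$ is false --- the leading Taylor term is a \emph{lower} bound on $I_j$; the inequality you actually invoke, $e^{-\lambda}I_j(\lambda)\leq(\lambda/2)^j/j!$, is correct and follows from $I_j(\lambda)\leq\frac{(\lambda/2)^j}{j!}I_0(\lambda)$ together with $e^{-\lambda}I_0(\lambda)\leq1$.

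The genuine gap is in your Gaussian-regime estimate. The claimed pointwise inequality $(e\lambda/(2j))^j\leq e^{-j^2/(2t)}$ on $m\leq j\leq t$ is equivalent to $\log\frac{2j}{e\lambda}\geq\frac{j}{2t}$, whose left side is \emph{negative} whenever $j<e\lambda/2$ while the right side is positive; the hypothesis $t\geq e^2\lambda$ does nothing to repair this. Nor is this an edge case: in the use of the lemma inside Theorem~\ref{theorem:FPAA} one has $\lambda=k^2/2$ large while $m=(n+1)/2=\Theta(\delta\lambda)$, so for small $\delta$ the bulk of the sum sits at $m\leq j\ll e\lambda/2$, precisely where $(e\lambda/(2j))^j\geq1$ is vacuous. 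The Gaussian regime therefore requires a \emph{collective} bound on the partial sum rather than a term-by-term one. For example, a Chernoff argument using the generating function $\sum_{j\in\mathbb{Z}}r^jI_j(\lambda)=e^{\lambda(r+1/r)/2}$ with $r=e^{(m-1)/t}$ yields $e^{-\lambda}\sum_{j\geq m}I_j(\lambda)\leq\exp\left[-\frac{(m-1)m}{t}+\lambda\left(\cosh\frac{m-1}{t}-1\right)\right]$, and it is here --- in controlling the $\cosh$ correction so that this is $\leq e^{-(m-1)^2/(2t)}$ for $m\leq t$ --- that $t\geq e^2\lambda$ is actually used. Your large-$j$ regime ($j>t$, geometric decay from $\frac{e\lambda}{2j}\leq\frac{1}{2e}$) is, by contrast, essentially sound.
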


\begin{proof}[Proof of Theorem~\ref{theorem:FPAA}]
Since
\begin{align}
    \left|
        \erf(k x) - \sgn(x)
    \right|
    &= \frac{2}{\sqrt{\pi}} \int_{k |x|}^{\infty} d{x'}\ e^{-{x'}^2}
    \leq \frac{2}{\sqrt{\pi}} \int_{k |x|}^{\infty} d{x'}\ \frac{x'}{k |x|} e^{-{x'}^2}
    = \frac{1}{\sqrt{\pi} k |x|} e^{- (k x)^2},
\end{align}
the polynomial $p_{\erf, k, n}$ in Lemma~\ref{lemma:error-function} satisfies
\begin{align}
    \max_{x \in [-1, -\delta] \cup [+\delta, +1]}
    \left|
        p_{\erf, k, n}(x) - \sgn(x)
    \right|
    \leq \frac{1}{\sqrt{\pi} k \delta} e^{- (k \delta)^2}
    + \frac{4 k}{\sqrt{\pi} n} \left(
        2 e^{- (n-1)^2 / 8 t} + e^{- k^2 / 2 - t}
    \right).
\end{align}
Thus, by choosing
\begin{align}
    k = \frac{1}{\delta} \sqrt{\frac{1}{2} W\left( \frac{2^{11}}{\pi \epsilon^8} \right)}, \quad
    t = \left\lceil \max \left\{
        e^2 k^2 / 2, \log \frac{2^8 k}{\sqrt{\pi} \epsilon^4}
    \right\} \right\rceil, \quad
    d = 2 \left\lceil \sqrt{t W \left( \frac{2^{16} k^2}{\pi t \epsilon^8} \right)} \right\rceil + 1,
\end{align}
it is ensured that
\begin{align}
    \left|
        p_{\erf, k, n}(x) - \sgn(x)
    \right|
    \leq \epsilon^4/32 + \epsilon^4/64 + \epsilon^4/64
    = \epsilon^4/16.
\end{align}
For this $p_{\erf, k, n}$, we define a degree-$d$ odd real polynomial $P_R$ as
\begin{align}
    P_R(x) = \frac{1}{1+\epsilon^4/16} p_{\erf, k, n}(x).
\end{align}
Then this satisfies
\begin{align}
    \max_{x \in [-1, -\delta] \cup [+\delta, +1]} \left| P_R(x) - \sgn(x) \right| &\leq \epsilon^4/8, \label{eq:sign-function}
\end{align}
and
\begin{align}
    | P_R(x) | &\leq 1.
\end{align}
By Theorem~\ref{theorem:QSP_real}, there exists a complex polynomial $P$ of the same degree as $P_R$ satisfying $\mathrm{Re}[P(x)] = P_R(x)$ and fulfilling conditions~\ref{cond:QSP_1}-\ref{cond:QSP_4} of Theorem~\ref{theorem:QSP}.
Then, by applying Theorem~\ref{theorem:QSVT}, we can construct a quantum circuit $\tilde{U}$ with $d$ queries to (controlled) $U$ and $U^\dagger$ via quantum singular value transformation, such that
\begin{align}
    \Pi \tilde{U} \ket{\psi_0}\bra{\psi_0}
    = P^{(SV)} \left( \Pi U \ket{\psi_0}\bra{\psi_0} \right).
\end{align}
Since $\Pi U \ket{\psi_0} = \alpha \ket{\psi_G}$, we have
\begin{align}
    \Pi U \ket{\psi_0}\bra{\psi_0}
    = \alpha \ket{\psi_G}\bra{\psi_0}.
\end{align}
Thus, singular value transformation yields
\begin{align}
    P^{(SV)} \left( \Pi U \ket{\psi_0}\bra{\psi_0} \right)
    = P(\alpha) \ket{\psi_G}\bra{\psi_0}.
\end{align}
Since $\mathrm{Re}[P(x)]=P_R(x)$ satisfies Eq.~\eqref{eq:sign-function} and the condition~\ref{cond:QSP_2} ($|P(\alpha)| \leq 1$) of Theorem~\ref{theorem:QSP} holds, we have $1-\epsilon^4/8 \leq \mathrm{Re}[P(\alpha)] \leq 1$ and $(\mathrm{Im}[P(\alpha)])^2 \leq 1-(1-\epsilon^4/8)^2$.
Hence, we find
\begin{align}
    {| P(\alpha) - 1 |}^2
    = ( \mathrm{Re}[P(\alpha)] - 1 )^2 + ( \mathrm{Im}[P(\alpha)] )^2
    \leq \epsilon^4/4.
\end{align}
Therefore, we obtain
\begin{align}
    \left\| \ket{\psi_G}\bra{\psi_G} \tilde{U} \ket{\psi_0}\bra{\psi_0} - \ket{\psi_G}\bra{\psi_0} \right\|
    = \left\|
        \left( \Pi \tilde{U} \ket{\psi_0}\bra{\psi_0} - P(\alpha) \ket{\psi_G}\bra{\psi_0} \right)
        + \left( P(\alpha) - 1 \right) \ket{\psi_G}\bra{\psi_0}
    \right\|
    \leq \epsilon^2/2.
\end{align}
On the other hand,
\begin{align}
    {\|\ket{\psi_G}-\tilde{U}\ket{\psi_0}\|}^2
    &= 2 \mathrm{Re} [ 1 - \braket{\psi_G|\tilde{U}|\psi_0} ]
    \leq 2 | 1 - \braket{\psi_G|\tilde{U}|\psi_0} |
    = 2 \left\| \ket{\psi_G}\bra{\psi_G} \tilde{U} \ket{\psi_0}\bra{\psi_0} - \ket{\psi_G}\bra{\psi_0} \right\|.
\end{align}
Therefore, it follows that $\tilde{U}$ satisfies
\begin{align}
    \|\ket{\psi_G}-\tilde{U}\ket{\psi_0}\| \leq \epsilon.
\end{align}
\end{proof}

\section{Proof of Eq.~\eqref{eq:p-exponential_bound}} \label{sec:proof_p-exponential_bound}
In this Appendix, we prove that the polynomial $p_{\exp, \lambda, n}$ in Lemma~\ref{lemma:exp-function} satisfies
\begin{align}
    \max_{x \in [-1, +1]} |p_{\exp, \lambda, n}(x)| \leq 1.
\end{align}

\begin{proof}
From Eq.~\eqref{eq:p_exp}, we find
\begin{align}
    | p_{\exp, \lambda, n}(x) |
    \leq e^{- \lambda} \left(
            I_0(\lambda) + 2 \sum_{j=1}^{n} I_j(\lambda)
    \right)
    < e^{- \lambda} \left(
            I_0(\lambda) + 2 \sum_{j=1}^{\infty} I_j(\lambda)
    \right)
    = e^{- \lambda} \sum_{j=-\infty}^{\infty} I_j(\lambda)
    = 1.
\end{align}
Here, we have used the fact that $I_j(x)$ is positive for $x > 0$
and that $|T_j(x)| \leq 1$ for $x \in [-1, +1]$ in the first and second inequalities, 
the symmetry property $I_j(x) = I_{-j}(x)$ in the third equality,
and the integral representation of $I_j$,
\begin{align}
    I_j(x) = \frac{1}{\pi} \int_{0}^{\pi} d\theta\ e^{x \cos \theta} \cos(j \theta),
\end{align}
in the fourth equality~\cite{Abramowitz1964}.
\end{proof}

\section{Statistical Mechanical Quantities of Free Spins in the Canonical and Generalized Ensembles} \label{sec:free-spins}
In this section, we compute the partition functions of the canonical and generalized ensembles for free spins, which are used for the evaluation of the computational cost of thermal state preparation discussed in Section~\ref{sec:cost_finite}.

First, we calculate the partition function of the canonical ensemble.
From a simple calculation, we have
\begin{align}
    \Z_N^\mathrm{can} = \mathrm{Tr} \left[ e^{-\beta H_N} \right] = \left( 2 \cosh \beta \right)^N.
\end{align}

Next, we calculate the partition function for the generalized ensemble.
The energy eigenstates of free spins are spin basis states, and their eigenvalues are determined by the total magnetization.
Thus, the partition function of the generalized ensemble associated with $\eta$ is given by
\begin{align}
    \Z_N^\eta = \mathrm{Tr} \left[ e^{-N\eta(H_N/N)} \right] = \sum_{k=0}^N \frac{N!}{k!(N-k)!} e^{-N\eta(2k/N-1)}.
\end{align}

\bibliography{document}
\end{document}